\pgfplotsset{compat=1.18}
\newcommand{\e}{\textsf{e}}
\newcommand{\eps}{\varepsilon}
\newcommand{\alg}{\textsc{Alg}\xspace}
\newcommand{\calE}{\ensuremath{\mathcal{E}}}
\newcommand{\calM}{\ensuremath{\mathcal{M}}}
\newtheorem{theorem}{Theorem}
\newtheorem{proposition}{Proposition}
\newtheorem{observation}{Observation}
\newtheorem{corollary}{Corollary}
\title{Threshold Testing and Semi-Online Prophet~Inequalities} 
\author{Martin Hoefer\thanks{Goethe University Frankfurt, Germany. Email: \texttt{mhoefer@em.uni-frankfurt.de}. Supported by DFG Research Unit ADYN (411362735) and grant 514505843.} \and Kevin Schewior\thanks{University of Southern Denmark, Denmark. Email: \texttt{kevs@sdu.dk}. Supported by the Independent Research Fund Denmark, Natural Sciences, grant DFF-0135-00018B.}}
\date{}
\begin{document}

\maketitle

´
\begin{abstract}
We study \emph{threshold testing}, an elementary probing model with the goal to choose a large value out of $n$ i.i.d.\ random variables. An algorithm can \emph{test} each variable $X_i$ once for some threshold $t_i$, and the test returns binary feedback whether $X_i \ge t_i$ or not. Thresholds can be chosen adaptively or non-adaptively by the algorithm. Given the results for the tests of each variable, we then select the variable with highest conditional expectation. We compare the expected value obtained by the testing algorithm with expected maximum of the variables.

Threshold testing is a semi-online variant of the gambler's problem and prophet inequalities. Indeed, the optimal performance of \emph{non-adaptive} algorithms for threshold testing is governed by the standard i.i.d.\ prophet inequality of approximately $0.745 + o(1)$ as $n \to \infty$. We show how \emph{adaptive} algorithms can significantly improve upon this ratio. Our adaptive testing strategy guarantees a competitive ratio of at least $0.869 - o(1)$. Moreover, we show that there are distributions that admit only a constant ratio $c < 1$, even when $n \to \infty$. Finally, when each box can be tested multiple times (with $n$ tests in total), we design an algorithm that achieves a ratio of $1-o(1)$.
\end{abstract}

\section{Introduction}

Consider an application process in which $n$ job candidates are interviewed sequentially one by one for a single position. For each candidate, we assume the qualification for the job can be expressed by an i.i.d.\ non-negative random variable $X_i$ with known distribution $F$. The goal is to maximize the expected value of the selected candidate. To which extent is the optimal achievable value harmed by the online arrival of the candidates? This is the classic \emph{gambler's problem}, in which the loss in expected value is expressed by prophet inequalities~\cite{KrengelS77,Lucier17,CorreaFHOV18}. More precisely, in this model one usually assumes (i) an interview fully reveals the realization of the respective variable, and (ii) the requirement of timely feedback forces the decision maker to irrevocably accept or reject the candidate upon seeing its realization. For i.i.d.\ variables, the best-possible prophet inequality states that a candidate $\sigma$ can be selected such that $\mathbb{E}[X_\sigma]\geq \beta\cdot\mathbb{E}[\max\{X_1,\dots,X_n\}]$, where $\beta\approx 0.745$~\cite{HillK82,CorreaFHOV21}.

The gambler's problem has been extremely popular over the last decades, but assumptions (i) and (ii) are often unrealistic. Even after a long interview, an interviewer is usually not fully aware of the entire set of exact qualifications of a candidate. Moreover, in many selection processes a decision does not have to be taken instantaneously. In this paper, we examine the consequences of an arguably more realistic set of conditions. First, instead of (i), we assume that each candidate only partly reveals their realization in the form of a \emph{single bit of information}. As we observe rather directly, this assumption is asymptotically equivalent to allowing to make a single threshold query to each candidate. Second, instead of (ii), we assume that the selection must be made only at the end of the process.

More formally, we consider the \emph{threshold testing} model. We assume that (i$'$), instead of revelation of $X_i$, we perform a single \emph{threshold test} with some arbitrary threshold $t_i$, and the feedback is binary (``positive'' in case $X_i\geq t_i$ or ``negative'' otherwise), and (ii$'$) a candidate must be chosen only after completing \emph{all} threshold tests. Again denoting the selected candidate by $\sigma$, we are interested in bounding the loss in expected value using an inequality of the form $\mathbb{E}[X_\sigma]\geq c\cdot\mathbb{E}[\max\{X_1,\dots,X_n\}]$ for $c\in(0,1)$. We call this a \emph{semi-online prophet inequality}. A testing algorithm that satisfies it is called \emph{$c$-competitive} and has a \emph{competitive ratio} of $c$.

There are four possible models emerging from the different choices of (i) vs.\ (i$'$) and (ii) vs.\ (ii$'$). The remaining two models do not require substantial analytical efforts. Indeed, when we only replace (i) by (i$'$), the consequences are trivial: There is an optimal algorithm for the standard gambler's problem that uses threshold tests. Thus, the existing optimal algorithm and its guarantees continue to apply because the space of algorithms only shrinks when we require threshold tests. Also, only replacing (ii) by (ii$'$) implies a trivial problem---one can see and choose an option with maximum value at the end, a $1$-competitive strategy. In contrast, the main contribution of this paper is to show that, with (i$'$) and (ii$'$) simultaneously, a mathematically interesting model arises.

Our results also imply a stark qualitative distinction to the standard model. It is well known that adaptivity, i.e., allowing decisions to depend on past observations, does not help for the standard gambler's problem. In our model, we observe rather easily that non-adaptive testing algorithms are unable to asymptotically improve upon the ratio of $\beta\approx 0.745$. Our main result is a set of adaptive algorithms that improves significantly upon this bound and achieves a ratio of approximately $0.869$. To complement this result, we show that there are distributions that imply a non-trivial asymptotical upper bound on the ratio, i.e., there is no $(1-o(1))$-competitive algorithm. We proceed to discuss our contributions in more detail.

\subsection{Techniques and Contribution}

Let $F$ be the cumulative distribution function of the variables. For most of the paper we assume (essentially w.l.o.g.) that $F$ is continuous. Our algorithms perform quantile testing, i.e., they use thresholds of the form $F^{-1}(1-q)$ for $q\in(0,1)$, oblivious of other properties of the distribution. It is straightforward to achieve a competitive ratio of $1-1/\e > 0.632$ by using threshold $t_i = F^{-1}(1-1/n)$ for all variables and then choosing any variable that has been tested positively (if any); see, e.g.,~\cite{HillK82}. The analysis of this strategy is asymptotically tight for each of the following two parametric distributions\footnote{Strictly speaking, $F_A$ and $F_B$ are not continuous. For a rigorous argument, one can resort to an arbitrarily close continuous approximation of the distributions to obtain the same result.}:
\begin{description}
    \item[$F_A$:] For some small $\eps>0$, with probability $1/\sqrt{n}$ choose a value uniformly from $[1-\eps,1+\eps]$, and 0 otherwise. As $\eps\to0$ and $n\to\infty$, the algorithm gets a positive test and therefore value $1$ with probability $1-1/\e$ while $\mathbb{E}[\max\{X_1,\dots,X_n\}]=1$.
    \item[$F_B$:] Choose the value $1$ with probability $1/n^2$ and $0$ otherwise. The algorithm obtains a value 1 with probability $(1-(1-1/n)^n)/n$ while $\max\{X_1,\dots,X_n\}=1$ with probability $1-(1-1/{n^2})^{n}$. As $n\to\infty$, the ratio of both probabilities approaches $1-1/\e$.
\end{description}
To improve upon $1-1/\e$, an algorithm needs to test for \emph{both} smaller and larger thresholds than $F^{-1}(1-1/n)$. Thresholds that are all larger than $F^{-1}(1-1/n)$ decrease the ratio for $F_A$; thresholds that are all smaller than $F^{-1}(1-1/n)$ decrease the ratio for $F_B$.

The class of algorithms we consider here is parameterized by $\alpha_1,\dots,\alpha_k\in(0,1)$ with $\alpha_1>\alpha_2>\dots>\alpha_k$.
In the beginning, such an algorithm uses $F^{-1}(1-\alpha_1/n)$ as a threshold until it sees a positive test. Generally, after $i<k$ positive tests, it sets $F^{-1}(1-\alpha_{i+1}/n)$ as a threshold. After $k$ positive tests (i.e., on $F^{-1}(1-\alpha_1/n),\dots,F^{-1}(1-\alpha_k/n)$), the algorithm can make arbitrary tests. Indeed, we eventually choose $\alpha_1>1$ and $\alpha_2<1$.

In our analysis, we exactly calculate the asymptotic probability that the algorithm sees precisely $i$ positive tests. Note that these probabilities asymptotically determine the probability density function of $X_\sigma$, the chosen variable. It is a step function in quantile space: The probability of making precisely $i$ positive tests is spread uniformly over the interval $[1-\alpha_i/n,1]$ for all $i\in \{1,\dots,k\}$.

We compare this probability density function of $X_\sigma$ with that of $\max\{X_1,\dots,X_n\}$ by stochastic dominance, leading to a tight analysis of such algorithms. For fixed values of $\alpha_1,\dots,\alpha_k$, we can analyze the competitive ratio of the respective strategy by solving a piecewise convex optimization problem, where the $k+1$ pieces correspond to the $k+1$ steps of the step function. We numerically maximize the minimum of this function.

We execute this analysis in detail for $k\in\{2,3\}$. For $k=3$, we obtain a competitive ratio of approximately $0.869$ by setting $\alpha_1\approx 2.035$, $\alpha_2\approx 0.506$, and $\alpha_3\approx 0.057$. Our numerical results for $k=4$ indicate only negligible improvement by further increasing $k$.

We complement this result by a constant upper bound on the competitive ratio, i.e., an impossibility of achieving a competitive ratio of $1-o(1)$. Intuitively, there is a trade-off inherent in every test: Testing for a smaller value yields a fallback option in case only one positive test is found at the end; testing for a larger value allows to differentiate between different variables when multiple of them have been tested positively. There are instances in which, irrespective of how the algorithm solves this trade-off, it loses a constant in the competitive ratio. For the proof we consider a distribution where values 1, 2, or 3 occur with probability $1/n$ each, and $0$ otherwise. It is minimal in the sense that a competitive ratio of $1-o(1)$ is achievable for any distribution that uses only three values in the support, or whose parameters do not depend on~$n$.

Finally, we establish that a competitive ratio of $1-o(1)$ \emph{can} be achieved using $n$ tests when a single variable can be tested multiple times (recall that the realization of each variable is only drawn \emph{once initially} from the distribution). The idea is to drop $o(n)$ variables from consideration and test the remaining ones with a threshold such that, with high probability, $\max\{X_1,\dots,X_n\}$ is larger than this threshold, but only $o(n)$ of these tests are positive. The additional $o(n)$ tests can then be used to find the maximum variable among those that have been tested positively. 

\subsection{Further Related Work}

The original prophet inequality~\cite{KrengelS77} states that there is a $1/2$-competitive algorithm in the setting of independent random variables with arbitrary distributions. Initiated by the work of Hajiaghayi, Kleinberg, and Sandholm~\cite{HajiaghayiKS07}, prophet inequalities have seen a surge of interest in the TCS community over the past 15 years. This has, for instance, led to the development of the tight i.i.d.\ prophet inequality with competitive ratio approximately $0.745$~\cite{CorreaFHOV21} as well as almost-tight random-order~\cite{CorreaSZ21,journals/corr/abs-2211-04145} and free-order~\cite{AgrawalSZ19,LiuLPSS21,PengT22,journals/corr/abs-2211-04145} prophet inequalities. Optimal or near-optimal prophet inequalities can be recovered without knowledge of the distribution but with $O(n)$ samples~\cite{CorreaDFS22,RubinsteinWW20,CorreaDFSZ21,abs-2011-06516}. Several works considered multiple-choice prophet inequalities with combinatorial constraints, e.g.,~\cite{Alaei14,conf/icalp/0002HKSV14,KleinbergW19,DuttingFKL20}. We also refer to the (at this point slightly outdated) surveys of Lucier~\cite{Lucier17} as well as Correa et al.~\cite{CorreaFHOV18} for additional references.

 We compare our work with the two works from the prophet-inequality literature that seem closest. Orthogonally to samples, Li, Wu, and Wu~\cite{journals/corr/abs-2205-05519} considered a version of the unknown i.i.d.\ setting in which quantile queries to the distribution can be made \emph{before} the sequence of variables arrives. They as well as Perez-Salazar, Singh, and Toriello~\cite{journals/corr/abs-2210-05634} also used a limited number of (quantile-based) thresholds to achieve near-optimal i.i.d.\ prophet inequalities. We are not aware of any version of the single-choice prophet inequality with general i.i.d.\ distributions to which the impossibility of approximately $0.745$ does not apply.

In stochastic probing (e.g.,~\cite{AsadpourNS08,GuptaN13,AdamczykSW14,GuptaNS16,GuptaNS17}), information is also revealed online according to known distributions. The standard models are, however, quite different from our model: The decision maker gets to choose which variables to probe, and each probe entirely reveals the realization of the variable at hand. Eventually, the decision maker can pick a (set of) variable(s), much like in our model. Comparing with an omniscient optimum (like in the prophet inequality) is, however, usually hopeless in this setting. Instead, one focusses on computing or approximating the strategy that maximizes the expected selected (total) value, a task that is straightforward for our model. 

In these probing models, the adaptivity gap measures the worst-case multiplicative gap between the value of the best adaptive and that of the best non-adaptive strategy. Note that, while our result does imply a nontrivial adaptivity gap (i.e., larger than 1) for our problem, we are studying a different question as we compare both adaptive and non-adaptive strategies with an omniscient optimum.

We are aware of two works in the probing literature in which tests do not eradicate all uncertainty about the respective variable. Hoefer, Schmand, and Schewior~\cite{conf/ijcai/0001SS21} considered a stochastic-probing model in which the first test to a variable only reveals whether the realization is above or below the median of the distribution, and additional tests can be used to further narrow down the realization in the same way applied to the conditional distribution. Gupta et al.~\cite{GuptaJSS19} generalized the related classic Pandora's box problem due to Weitzman~\cite{Weitzman79} and considered the Markovian model. There, a set of Markov chains, which correspond to variables that can eventually be chosen, is given and, in each step, a probe can be used to advance one of the Markov chains.

Threshold tests have also been considered in the context of estimating (properties of) a probability distribution. For example, Paes Leme et al.~\cite{LemeSTW23} gave bounds on the sample complexity, i.e., required number of such tests, to estimate the approximately optimal reserve price for certain types of distributions. Meister and Nietert~\cite{MeisterN21} as well as Okoroafor et al.~\cite{OkoroaforGK23} investigated the sample complexity of estimating other objects, e.g., mean, median, or even full CDF, of the \emph{empirical} distribution in a non-stochastic setting.

\section{Preliminaries}

We consider \emph{threshold testing} defined as follows. We are given a distribution $F$ on $\mathbb{R}_{\geq 0}$ with finite expectation. There are $n$ \emph{boxes}. Each box $i$ contains a hidden realization $X_1,\ldots,X_n$ drawn \emph{once upfront} i.i.d. from $F$. A \emph{testing algorithm} can apply a \emph{threshold test} to each box $i \in [n] = \{1,\ldots,n\}$ exactly once, in that order. To apply a test to box $i$, the algorithm chooses a threshold $t_i \ge 0$ and receives a binary feedback whether $X_i \geq t_i$ or not. Upon testing $i$, the algorithm learns if $X_i\geq t_i$ or not, but \emph{not the precise value of $X_i$}. If $X_i \ge t_i$ we say the test was \emph{positive}, otherwise it was \emph{negative}. The algorithm may choose thresholds adaptively based on the feedback from earlier tests. Finally, after testing each box, the algorithm chooses one box $\sigma \in [n]$ and receives a reward of $X_{\sigma}$. Here, $\sigma$ is a random variable based on the observed feedback and the internal randomization of the algorithm. We call an algorithm $c$-competitive if $\mathbb{E}[X_\sigma]\geq c\cdot\mathbb{E}[\max\{X_1,\dots,X_n\}]$. We are interested in maxmimizing $c$ in the limit as $n\rightarrow\infty$.

\subparagraph{Non-adaptive Algorithms and Prophet Inequalities.}
Our testing problem has inherent connections to the classic prophet inequality for i.i.d.\ random variables. Consider the \emph{non-adaptive} variant, in which the algorithm chooses thresholds $t_1,\ldots,t_n$ upfront. We observe that this problem is essentially the standard gambler's problem governed by prophet inequalities.
The optimal algorithm for the gambler's problem emerges from straightforward backwards induction. For each box $i \in [n]$, the gambler sets a threshold $t_i$ to the expected profit from the optimal algorithm for boxes $i+1,\ldots,n$. The algorithm accepts $i$ if and only if $X_i \ge t_i$. It is straightforward to verify that this implies $t_1 \ge \ldots \ge t_n$. All $t_i$-values can be computed in advance. As such, a non-adaptive algorithm for threshold testing can use these thresholds and imitate the optimal algorithm for the gambler's problem.
\begin{observation}
    \label{obs:non-adaptive}
    The optimal non-adaptive testing algorithm for $n$ boxes obtains at least the expected reward of the optimal algorithm for the gambler's problem with $n$ boxes.
\end{observation}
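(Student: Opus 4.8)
The plan is to exhibit one particular non-adaptive testing algorithm whose expected reward equals that of the optimal gambler, and then to invoke optimality of the best non-adaptive testing algorithm. The observation that makes this work is that the optimal gambler's policy is a fixed-threshold policy whose thresholds are obtained offline (by backward induction) and therefore do not depend on the realizations $X_1,\dots,X_n$; moreover, the gambler's accept/reject decision at box $i$ is a function only of the binary event $[X_i \ge t_i]$ together with whether some earlier box was already accepted. These are precisely the ingredients available to a non-adaptive testing algorithm.

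Concretely, I would define a testing algorithm \alg as follows. Let $t_1 \ge t_2 \ge \dots \ge t_n$ be the thresholds of the optimal gambler, with $t_i$ equal to the optimal expected continuation reward from boxes $i+1,\dots,n$; all of these are computed in advance from $F$ and $n$. \alg tests box $i$ with threshold $t_i$ for every $i$, which is admissible for a non-adaptive algorithm since the $t_i$ do not depend on any feedback. After all $n$ tests, let $j$ be the smallest index whose test was positive, if one exists, and set $\sigma = j$; otherwise set $\sigma$ arbitrarily (say $\sigma = n$). The non-adaptivity restriction constrains only the thresholds, so this end-of-process selection rule is permitted.

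For correctness I would couple \alg with the gambler on the same realization vector $(x_1,\dots,x_n)$. The gambler accepts box $j$, the smallest index with $x_j \ge t_j$, and collects $x_j$; if no such index exists it collects $0$ (and under the alternative convention that forces acceptance of the last box, \alg fares at least as well too). By construction the test of box $i$ under \alg is positive exactly when $x_i \ge t_i$, so \alg selects the same index $j$ and collects $x_j$, and in the case of no positive test it collects $x_\sigma \ge 0$. Hence \alg's reward pointwise dominates the gambler's, giving $\mathbb{E}[X_\sigma] \ge \mathbb{E}[X_{\mathrm{gambler}}]$, and the optimal non-adaptive testing algorithm does at least as well as \alg.

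I do not expect a genuine obstacle here; the statement is essentially definitional once one notices that the gambler's policy factors through the threshold comparisons. The only points needing a line of care are (a) that the optimal gambler is indeed a monotone threshold policy with offline-computable thresholds, which follows from the backward-induction argument recalled just before the statement, and (b) that the gambler's access to the exact value $X_i$ upon acceptance confers no advantage, since that value enters the gambler's behavior only as the collected reward, which \alg collects identically.
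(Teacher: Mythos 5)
Your proof is correct and takes essentially the same route as the paper: the paper's justification is the sentence immediately preceding the observation, which observes that the gambler's optimal policy is a fixed-threshold policy with offline-computable, monotone thresholds, and that a non-adaptive testing algorithm can adopt those thresholds and imitate the gambler. You simply spell out the coupling and the pointwise dominance that the paper leaves implicit.
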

We also observe the converse---for large $n$, the optimal reward of non-adaptive threshold testing is essentially the optimal reward in the gambler's problem. 
\begin{proposition}
    \label{prop:non-adaptive}
    The optimal non-adaptive testing algorithm for $n$ boxes obtains at most the expected reward of the optimal algorithm for the gambler's problem with $n+1$ boxes.
\end{proposition}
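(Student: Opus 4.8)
The plan is to show that any non-adaptive testing algorithm for $n$ boxes can be simulated by a gambler on $n+1$ boxes who does at least as well. Fix an optimal non-adaptive testing algorithm $\alg$ with thresholds $t_1 \ge \dots \ge t_n$ (monotonicity can be assumed by the same backwards-induction/exchange argument that applies to the gambler: if $t_i < t_{i+1}$, swapping improves or preserves the value). The outcome of $\alg$ is determined by the index of the first positive test, say $j$ (with $j = n+1$ if all tests are negative); given that the first positive test is at box $j$, $\alg$ must select some box, and its conditional reward is at most $\mathbb{E}[X_j \mid X_j \ge t_j]$ when $j \le n$ and $0$ when no test is positive — because all boxes $1,\dots,j-1$ are conditioned to be below $t_1,\dots,t_{j-1}$ respectively, boxes $j+1,\dots,n$ are untested and independent with mean $\mathbb{E}[X]$, and $X_j$ is the only box known to exceed a threshold; to make this step clean I would argue that the best the algorithm can do is pick box $j$ (or, if $\mathbb{E}[X] > \mathbb{E}[X_j \mid X_j \ge t_j]$, pick an untested box, but then one can show that choice is dominated by a gambler who simply never stops and keeps the last box).

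Next I would define the gambler on $n+1$ boxes with thresholds $(t_1, \dots, t_n, 0)$: the gambler accepts box $i$ iff $X_i \ge t_i$, and the final box $n+1$ is always accepted (threshold $0$). By construction the gambler stops at exactly the same index $j$ as $\alg$'s first positive test for $j \le n$, earning $\mathbb{E}[X_j \mid X_j \ge t_j]$ conditioned on reaching and accepting box $j$; and if all of boxes $1,\dots,n$ test negative, the gambler proceeds to box $n+1$ and earns $\mathbb{E}[X] \ge 0$. Hence, conditioned on each value of $j$, the gambler's expected reward is at least $\alg$'s, so the gambler's total expected reward is at least that of $\alg$. Since the gambler we constructed is a feasible (not necessarily optimal) strategy for the $(n+1)$-box gambler's problem, the optimal $(n+1)$-box gambler does at least as well, which is the claim.

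The main obstacle is the step where $\alg$, upon seeing the first positive test at box $j$, could in principle prefer to select an as-yet-untested box $\ell > j$ because its unconditional mean $\mathbb{E}[X]$ exceeds $\mathbb{E}[X_j \mid X_j \ge t_j]$ — or, more subtly, could select among several positive tests if thresholds were not monotone, or could decline all positive tests and bet on a later untested box. The monotonicity reduction handles the "several positive tests" concern (the first positive test is at the highest threshold). For the "prefer an untested box" concern, I would observe that the gambler's extra $(n+1)$-st box with threshold $0$ exactly captures the option of "hold out for an unconditioned draw": whenever $\alg$ would rather take $\mathbb{E}[X]$ than a conditioned value, the corresponding gambler trajectory that skips all positive tests and lands on box $n+1$ matches it. Making this case analysis fully rigorous — in particular checking that no mixed strategy of $\alg$ beats the envelope of these two extreme behaviors — is the part that needs care, but it reduces to the elementary fact that $\mathbb{E}[X_\sigma] = \sum_j \Pr[\text{first positive at } j]\cdot \mathbb{E}[X_\sigma \mid \text{first positive at } j]$ and bounding each conditional term by $\max\{\mathbb{E}[X_j \mid X_j \ge t_j],\, \mathbb{E}[X]\}$.
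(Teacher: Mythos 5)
Your core argument matches the paper's exactly: order the thresholds non-increasingly, have the gambler on $n+1$ boxes accept the first positive test using the same thresholds, and use the extra box (threshold $0$) to cover the all-negative case. That is the right proof.

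The last paragraph of worries, though, rests on a misreading of the model that you should clear up. In the \emph{non-adaptive testing} model, all $n$ boxes are tested before the algorithm selects one; the event ``first positive test at index $j$'' does not render boxes $j+1,\dots,n$ untested, and there is no option to ``hold out'' for an undrawn box. After all tests, the algorithm sees, for each box $i$, either the event $X_i\ge t_i$ or $X_i<t_i$, and it picks the box with the highest conditional expectation. Since for i.i.d.\ boxes and $t_1\ge\dots\ge t_n$ we have $\mathbb{E}[X\mid X\ge t_j]\ge\mathbb{E}[X]\ge\mathbb{E}[X\mid X<t_i]$ for every $i$, the positively-tested box with the highest threshold (i.e., the earliest positive test) is always the algorithm's choice, and when no test is positive the best choice is box $1$, whose conditional expectation $\mathbb{E}[X\mid X<t_1]$ is $\le\mathbb{E}[X]$ (not $0$, as you wrote, but still dominated by the gambler's box $n+1$). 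So the ``main obstacle'' you describe---the algorithm preferring an unconditioned box, or declining positive tests, or mixing---never arises, and no extra case analysis is needed. With that cleanup, your proof is exactly the paper's.
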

\begin{proof}
    Consider the optimal non-adaptive algorithm for threshold testing. W.l.o.g.\ we can assume that the chosen thresholds are ordered such that $t_1^* \ge \ldots \ge t_n^*$. If at least one test is positive, then among the positively tested boxes, the algorithm chooses the one with the highest threshold --  which is the earliest one in the sequence. The gambler can imitate this in the online model by using thresholds $t_1^*,\ldots,t_n^*$ and accepting the first one with $X_i \ge t_i^*$. If all tests are negative, then the testing algorithm accepts $X_1$ -- it failed the test with the highest threshold and, as such, has the highest conditional expectation. Clearly, this is less than the apriori expectation of $F$, which can be obtained by the gambler from accepting box $X_{n+1}$. Hence, the gambler with $n+1$ boxes obtains more expected value.
\end{proof}
For large $n$ the best competitive ratio is approximately 0.745 by the optimal prophet inequality~\cite{HillK82,CorreaFHOV21}. For the rest of the paper we focus on \emph{adaptive} testing algorithms.

\subparagraph{Threshold Testing vs.\ General Binary Feedback.}
We discuss our scenario in the context of a more general model. In binary-feedback testing, the algorithm can choose a set $Y_i\subset\mathbb{R}_{\geq 0}$ and learns whether or not $X_i\in Y_i$. Note this model generalizes threshold testing -- setting a threshold $t_i$ can be simulated by choosing $Y_i=\{x\in\mathbb{R}\mid x\geq t_i\}$. Nevertheless, the competitive ratio achievable is asymptotically the same as for threshold testing. As such, we restrict attention to threshold testing. 

\begin{proposition}
    \label{prop:gen-binary}
    The optimal algorithm for binary-feedback testing with $n$ boxes obtains at most the expected reward of the optimal algorithm for threshold testing with $n+1$ boxes.
\end{proposition}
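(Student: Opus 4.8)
The plan is to have an optimal binary-feedback algorithm $\mathcal{A}$ on $n$ boxes be simulated online by a threshold-testing algorithm $\mathcal{A}'$ on $n+1$ boxes that does at least as well; since $\mathcal{A}'$ is itself a valid threshold-testing algorithm, this proves the claim. Write $\mu := \mathbb{E}_{X \sim F}[X]$. The starting point is that a query ``$X_i \in Y_i$?'' reveals only which of two conditional distributions $X_i$ lies in: the events $\{X_i \in Y_i\}$ and $\{X_i \notin Y_i\}$ have probabilities $p$ and $1-p$ (with $p := \Pr[X \in Y_i]$), and, by the law of total expectation, their conditional expectations lie on opposite sides of $\mu$. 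Call the side with conditional expectation at least $\mu$ the \emph{high side} $A_i$, and set $q_i := \Pr[X \in A_i] \in \{p, 1-p\}$. (If $p \in \{0,1\}$ the query is uninformative and can be ignored, so assume $p \in (0,1)$.)

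When $\mathcal{A}$ wishes to query $Y_i$, the algorithm $\mathcal{A}'$ instead tests box $i$ at the $q_i$-quantile, i.e., picks $t_i \ge 0$ with $\Pr[X_i \ge t_i] = q_i$ (this uses continuity of $F$, which is essentially w.l.o.g.; atoms can be handled by randomizing the threshold). It reports to $\mathcal{A}$ the event $\{X_i \in A_i\}$ on a positive test and $\{X_i \in A_i^c\}$ on a negative one. Since $\Pr[\text{test positive}] = q_i = \Pr[X_i \in A_i]$ and the boxes are independent, the sequence of feedback bits handed to $\mathcal{A}$ has \emph{exactly} the distribution of a genuine run of $\mathcal{A}$ (adaptivity is no obstacle, as $q_i$ is a deterministic function of the past feedback). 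Consequently the box $\sigma$ that $\mathcal{A}$ eventually selects, together with all reported events, has the same joint law as in reality; in particular $\mathbb{E}[X^{\mathcal{A}}_\sigma] = \mathbb{E}\big[\mathbb{E}[X \mid \text{event reported for } \sigma]\big]$.

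Now $\mathcal{A}'$ selects its box as follows: if $\mathcal{A}$'s chosen $\sigma$ is one whose test came out positive (equivalently, $\mathcal{A}$ was told the high-side event for $\sigma$), then $\mathcal{A}'$ also picks box $\sigma$; otherwise $\mathcal{A}'$ picks box $n+1$ (tested with the trivial threshold $0$, so still of conditional expectation $\mu$). For the first case we invoke the elementary upper-tail fact that among all events of probability $q$, the event $\{X \ge t\}$ with $\Pr[X \ge t] = q$ maximizes $\mathbb{E}[X \mid \cdot]$: indeed, for any $A$ with $\Pr[A] = q$ we have $\mathbb{E}[X\mathbf{1}_{X \ge t}] - \mathbb{E}[X\mathbf{1}_A] \ge t\,\Pr(\{X \ge t\}\setminus A) - t\,\Pr(A \setminus \{X \ge t\}) = 0$, because both set probabilities equal $q - \Pr(A \cap \{X \ge t\})$. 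Hence, conditioned on the reported events, $\mathbb{E}[X^{\mathcal{A}'}_\sigma] = \mathbb{E}[X \mid X \ge t_\sigma] \ge \mathbb{E}[X \mid X \in A_\sigma] = \mathbb{E}[X^{\mathcal{A}}_\sigma]$. In the second case $\mathcal{A}$'s contribution from $\sigma$ is $\mathbb{E}[X \mid X \in A_\sigma^c] \le \mu$ by definition of the high side, while $\mathcal{A}'$ secures exactly $\mu$. Averaging over the feedback gives $\mathbb{E}[X^{\mathcal{A}'}] \ge \mathbb{E}[X^{\mathcal{A}}]$.

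The two delicate points are exactly those just met: (a) checking that relabelling the threshold outcome as the high/low side reproduces the feedback distribution under $\mathcal{A}$'s adaptivity --- this is why $t_i$ must be tuned so that $\Pr[X_i \ge t_i]$ equals $\Pr[X_i \in A_i]$; and (b) seeing why the extra $(n{+}1)$-st box is genuinely needed --- a \emph{negative} threshold test only certifies the lower tail, whose conditional expectation can be strictly below $\mathbb{E}[X \mid X \in A_\sigma^c]$, so box $\sigma$ cannot be reused as the fallback, whereas a fresh box always delivers $\mu \ge \mathbb{E}[X \mid X \in A_\sigma^c]$. The rest is bookkeeping; note the argument does not even use that $\mathcal{A}$ ultimately selects the box of largest conditional expectation.
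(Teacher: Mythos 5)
Your proposal is correct and follows essentially the same route as the paper's proof: replace each binary test by a threshold test calibrated to the same "high-side" probability so the feedback distribution (and hence the induced selection rule) is reproduced exactly, then observe that the upper tail dominates any event of equal probability, and fall back on the untested $(n{+}1)$-st box when the original algorithm ends up at a low-side event. Your version is slightly more self-contained in that it explicitly verifies the tail-maximization inequality and spells out that box $n+1$ receives a trivial test, both of which the paper takes for granted.
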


\begin{proof}
    Consider an optimal algorithm for binary-feedback testing with $n$ boxes. We modify this algorithm to obtain an algorithm for threshold testing with $n+1$ boxes. We assume w.l.o.g.\ that, whenever the original algorithm chooses a set $Y_i$ to test the $i$-th box, then $\mathbb{E}[X_i \mid X_i\in Y_i]\geq\mathbb{E}[X_i]$ and, therefore, $\mathbb{E}[X_i \mid X_i\notin Y_i]\leq \mathbb{E}[X_i]$. We replace any such test with a threshold test for a threshold $t_i$ such that $\Pr[X_i\geq t_i]=\Pr[X_i\in Y_i]$, i.e., both tests are positive with precisely the same probability and $\mathbb{E}[X_i \mid X_i\geq t_i]\geq \mathbb{E}[X_i \mid X_i\in Y_i]$. We continue in the same way as the original algorithm would upon a positive or negative test, modifying subsequent tests in the same way. If the original algorithm eventually picks a box $i^\star$ with a positive test result, the new algorithm picks the same box. Thereby it obtain at least the same value since, by our choice of $t_{i^\star}$, $\mathbb{E}[X_{i^\star}\mid X_{i^\star}\geq t_{i^\star}]\geq \mathbb{E}[X_{i^\star}\mid X_{i^\star}\in Y_{i^\star}]$. Similarly, if the original algorithm would pick a box with a negative result, the new algorithm picks box $n+1$, obtaining $\mathbb{E}[X_{n+1}]=\mathbb{E}[X_{i^\star}]\geq \mathbb{E}[X_{i^\star}\mid X_{i^\star}\notin Y_{i^\star}]$ by our assumption above.
\end{proof}
\section{Adaptive Testing}
\label{sec:algo}

In this section, we prove the following theorem. For simplicity, we consider a continuous distribution $F$ throughout the proof. In the following section, we discuss that the result also generalizes to finite discrete distributions. 

\begin{theorem}\label{thm:iid-alg}
	There is an efficient $(0.869-o(1))$-competitive algorithm for threshold testing with a continuous distribution.
\end{theorem}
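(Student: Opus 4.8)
The plan is to analyze the parameterized family of adaptive algorithms described in the introduction, specializing to $k=3$ with thresholds governed by $\alpha_1,\alpha_2,\alpha_3$. Recall the algorithm tests box after box using threshold $F^{-1}(1-\alpha_{i+1}/n)$ after having seen $i$ positive tests (for $i<k$), and behaves arbitrarily after $k$ positive tests. First I would set up the asymptotic combinatorics: let $p_i$ denote the probability that the algorithm sees exactly $i$ positive tests in total. Since each test in ``phase $i$'' is positive independently with probability $\alpha_{i+1}/n$, and there are $\Theta(n)$ boxes, the number of negative boxes spent before the $(i+1)$-th positive test is asymptotically exponential, and one gets clean closed forms for the $p_i$ as $n\to\infty$ (products and sums of exponential-type factors in the $\alpha_j$). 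I would verify these limits carefully, since the whole analysis hinges on them; the key point made in the excerpt is that, conditioned on exactly $i$ positive tests, the realization of the selected box $X_\sigma$ is (asymptotically) distributed as the maximum of $i$ i.i.d.\ draws conditioned to lie in the top $\alpha_i/n$ quantile — equivalently, in quantile space, $1-F(X_\sigma)$ is uniform on $[0,\alpha_i/n]$ scaled appropriately — so the density of $X_\sigma$ is a step function in quantile space with $k+1$ pieces.

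Next I would write $\mathbb{E}[X_\sigma]$ and $\mathbb{E}[\max_i X_i]$ as integrals in quantile space. Using the substitution $x = F^{-1}(1-q)$, both expectations become integrals of $F^{-1}(1-q)$ against explicit densities: for the prophet, the density corresponding to $q\mapsto$ (the max of $n$ draws), which asymptotically concentrates so that $n(1-F(\max))$ is $\mathrm{Exp}(1)$; for the algorithm, the step-function density assembled from the $p_i$. The competitive ratio is then the infimum over all continuous $F$ of the ratio of these two integrals. The standard trick here (as in prophet-inequality analyses via stochastic dominance) is that the worst case is governed by how the two densities compare pointwise: one shows the ratio is minimized by ``extremal'' distributions that place all mass where the density ratio is worst, reducing the problem to a finite-dimensional optimization. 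Concretely, I would argue that it suffices to check the ratio against the family of two-point (or threshold-type) distributions — equivalently, to verify that the algorithm's quantile density stochastically dominates a $c$-scaled version of the prophet's density in the appropriate cumulative sense — which turns the guarantee into a collection of inequalities, one per ``piece'' of the step function, each of which is a convex (or piecewise-convex) condition in the integration variable.

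Then comes the optimization: for fixed $\alpha_1,\alpha_2,\alpha_3$, the competitive ratio is the minimum of $k+1=4$ explicit one-variable convex functions (one per piece), and I would maximize this minimum over $(\alpha_1,\alpha_2,\alpha_3)$. This is where I would plug in the claimed near-optimal values $\alpha_1\approx 2.035$, $\alpha_2\approx 0.506$, $\alpha_3\approx 0.057$ and verify numerically (with enough rigor — e.g.\ interval arithmetic or explicit bounding of the convex pieces) that the minimum exceeds $0.869$. The role of $\alpha_1>1$ and $\alpha_2<1$ is exactly to defeat the two tight instances $F_A$ and $F_B$ from the introduction simultaneously: $\alpha_1>1$ tests ``deeper'' than $F^{-1}(1-1/n)$ to catch the rare-but-valuable single hit (helping against $F_B$), while the later, smaller $\alpha_i$ let the algorithm keep refining once multiple hits occur (helping against $F_A$). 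Finally I would note efficiency: the thresholds are fixed constants computable offline, so the per-box work is $O(1)$, and quantile evaluation $F^{-1}$ is assumed available.

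I expect the main obstacle to be the reduction from ``infimum over all continuous $F$'' to a tractable optimization — i.e.\ rigorously establishing that comparing the two quantile-space densities via stochastic dominance is both necessary and sufficient, and correctly identifying the worst-case distributions so that the bound is tight rather than merely sufficient. The asymptotic computation of the $p_i$ is routine but must be done with care about the $o(1)$ error terms (uniform convergence so that the limit of the ratio equals the ratio of the limits), and the final numerical maximization must be carried out to enough precision and with guaranteed error bounds to certify the $0.869$ figure.
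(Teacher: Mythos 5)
Your proposal follows essentially the same route as the paper: fix the adaptive staircase algorithm with quantiles $\alpha_j/n$ for $k=3$, compute the asymptotic probabilities of $0,1,2,3$ positive tests, compare the complementary CDF $A(t)=\Pr[X_\sigma\ge t]$ to $G_m(t)=\Pr[\max_i X_i\ge t]$, reduce to checking one inequality per ``piece'' of the step function, and numerically optimize over $(\alpha_1,\alpha_2,\alpha_3)$ to certify the $0.869$ bound. Two small points worth flagging. First, you describe the conditional law of $X_\sigma$ given exactly $i$ positive tests as ``the maximum of $i$ i.i.d.\ draws conditioned to lie in the top $\alpha_i/n$ quantile,'' but that is not what the algorithm produces: it selects the box that passed the test for $\tau_i$, whose conditional law is simply \emph{uniform} over the top $\alpha_i/n$ quantile --- which is the correct statement you immediately give afterward. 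Second, the ``main obstacle'' you anticipate --- justifying the reduction to a pointwise comparison of quantile densities --- is not actually an obstacle. One only needs sufficiency, not necessity: if $A(t)\ge c\cdot G_m(t)$ for all $t\ge 0$, then $\mathbb{E}[X_\sigma]=\int_0^\infty A(t)\,dt\ge c\int_0^\infty G_m(t)\,dt = c\cdot\mathbb{E}[\max_i X_i]$ directly from the tail-integral formula for nonnegative random variables. Identifying extremal/tight distributions is only needed to show the analysis is tight, not to prove the competitive-ratio lower bound stated in the theorem; the paper handles this after the proof via a ``golden nugget'' distribution but it is not part of the theorem's certification.
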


\begin{proof}
We consider a class of algorithms that is parameterized by a monotone sequence of quantile parameters $q_1,\dots,q_k \in (0,1)$ where $q_1 > \ldots > q_k$. For convenience, we assume $q_0 = 1$ and $q_{k+1} = \ldots = q_n = 0$. The algorithm starts by testing for the $1-q_1$ quantile of $F$. Since the distribution is continuous, $q_1$ corresponds to a threshold $\tau_1$ (i.e., $\tau_1$ is such that $Pr[X_i \ge \tau_1] = q_1$). Then for any $j \ge 1$, if the algorithm sees a negative test for $\tau_j$, it continues testing with $\tau_j$. If it sees a positive test for $\tau_j$, it increments $j$ to $j+1$ (i.e., continues testing with the next threshold $\tau_{j+1}$). After having tested each box, it selects the one with the best conditional expectation. This is either the box tested positively for the threshold corresponding to the largest quantile, or any box (when all tested negative for $\tau_1$).

We consider the values of $q_j$ in the form $q_j = \alpha_j/n$ for some $\alpha_j\in (0,n)$, for all $j\in[k]$. In Table~\ref{tab:iid-results}, we give example values of $\alpha_j$ and the resulting competitive ratios for different values of $k$. We obtained these values by numerical optimization over a bounded interval.
\begin{table}[t]
	\centering
	\begin{tabular}{cccccl}
		\toprule
		$k$ & $\alpha_1$ & $\alpha_2$ & $\alpha_3$ & $\alpha_4$ & comp.\ ratio as $n\rightarrow\infty$\\
		\midrule
		$1$ & $1$ & -- & -- & -- & $1-1/e\approx0.63212$\\
		$2$ & $1.83298$ & $0.35932$ & -- & -- & $>0.84005$\\
		$3$ & $2.035135$ & $0.5063$ & $0.05701$ & -- & $>0.86933$\\
		$4$ & $2.038$ & $0.508$ & $0.058$ & $0.0002$ & $>0.86956$\\					
		\bottomrule
	\end{tabular}
	\caption{Numerically optimized parameters and competitive ratios for different values of $k$.}
	\label{tab:iid-results}
\end{table}

We use $F$ to denote the CDF, i.e., $F(x) = \Pr[X_i < x]$, for each $i \in [n]$ and $x \in [0,1]$. For the maximum over $n$ i.i.d.\ draws, we obtain the CDF $F_m(x) = (F(x))^n = (\Pr[X_i < x])^n = \prod_i (\Pr[X_i < x]) = \Pr[\max_i X_i < x]$. We denote the complementary CDF by $G(x) = \Pr[X_i \ge x] = 1 - F(x)$ and $G_m(x) = \Pr[\max_i X_i \ge x] = 1 - F_m(x)$. Since $F$ is continuous, threshold $\tau_j = G^{-1}(q_j) = F^{-1}(1-q_j)$, i.e., $G(\tau_j) = q_j$ and $F(\tau_j) = 1-q_j$. Similarly, $F_m(\tau_j) = (1-q_j)^n$ and $G_m(\tau_j) = 1 - (1-q_j)^n$. We here restrict attention to values of $\alpha_j \in o(n)$, we will assume these are constants throughout. This implies $\lim_{n \to \infty} G_m(\tau_j) = 1 - \e^{-\alpha_j}$.

Our analysis proceeds via stochastic dominance. For any given threshold $t \ge 0$ we compare the complementary CDF $G_m(t)$ to the complementary CDF of our algorithm. We denote the latter by $A(t) = \Pr[X_{\sigma} \ge t]$, where $\sigma$ is the box chosen by our algorithm. If $A(t) \ge c \cdot G_m(t)$ for all $t\ge 0$, then the algorithm is $c$-competitive by stochastic dominance.

For any given $t \in [0,\infty)$ let $q = G(t) = 1-F(t)$ and $\alpha = n\cdot q$. We will conduct our analysis with respect to $\alpha \in [0,n]$ instead of $t \in [0,\infty)$. We split $[0,n]$ into intervals $I_j = [\alpha_{j+1},\alpha_j]$ for $j=0,\ldots,k$, where we use $\alpha_0 = n$ and $\alpha_{k+1} = 0$. Suppose we see a positive test for $\alpha_j$. Then, between the positive test for $\alpha_{j-1}$ and the one for $\alpha_j$, assume there are $\ell_j \ge 0$ negative tests.

\subparagraph{Two Thresholds.}
We start by discussing an algorithm with $k=2$ thresholds. Suppose $\alpha \in I_2$. First, let's assume we only have a positive test for $t_1$ but not for $t_2$. We call this event $\calE_{10}$. It happens with probability
\begin{align*}
\Pr[\calE_{10}] &= \sum_{\ell_1=0}^{n-1} (1-q_1)^{\ell_1} q_1 \cdot (1-q_2)^{n-1-\ell_1} = q_1 \cdot (1-q_2)^{n-1} \cdot \frac{1-\left(\frac{1-q_1}{1-q_2}\right)^n}{1-\frac{1-q_1}{1-q_2}} \\
&= q_1 \cdot \frac{(1-q_2)^n - (1-q_1)^n}{q_1 - q_2} \quad = \quad \alpha_1 \cdot \frac{\left(1 - \frac{\alpha_2}{n}\right)^n - \left(1-\frac{\alpha_1}{n}\right)^n}{\alpha_1 - \alpha_2}\enspace.
\end{align*}
In this case, the algorithm selects the box that was tested positive for $\tau_1$. It has a value at least $t$ with probability $q/q_1 = \alpha/\alpha_1$.

Otherwise, we have a positive test for $\tau_1$ and $\tau_2$, which we call event $\calE_{11}$. The event that we have a positive test for $\tau_1$ (irrespective of what happens for $\tau_2$) is called $\calE_1$. Clearly,
\begin{align*}
\Pr[\calE_{11}] &= \Pr[\calE_1] - \Pr[\calE_{10}]. 
\end{align*}
In case $\calE_{11}$ happens, we select the box that tested positive for $\tau_2$. It has a value at least $t$ with probability $q/q_2 = \alpha/\alpha_2$.

Overall, for $\alpha \in I_2$ we see
\begin{align*}
A(\alpha) &= \frac{\alpha}{\alpha_1} \Pr[\calE_{10}] + \frac{\alpha}{\alpha_2} \Pr[\calE_{11}] = \alpha\cdot \left(\frac{\Pr[\calE_{10}]}{\alpha_1} + \frac{\Pr[\calE_1] - \Pr[\calE_{10}]}{\alpha_2}\right)\\
 &= \alpha \cdot \left(\frac{\Pr[\calE_1]}{\alpha_2}  - \frac{(\alpha_1 - \alpha_2)\Pr[\calE_{10}]}{\alpha_1\alpha_2}\right) = \frac{\alpha}{\alpha_2} \cdot \left(1 
 - \left(1-\frac{\alpha_2}{n}\right)^n\right) \\
 &= c_2(\alpha) \cdot \left(1-\left(1-\frac{\alpha}{n}\right)^n\right) = c_2(\alpha) \cdot G_m(\alpha)
\end{align*}
Hence,
\[
    c_2(\alpha) = \frac{\alpha}{\alpha_2} \cdot \frac{1-\left(1-\frac{\alpha_2}{n}\right)^n}{1-\left(1-\frac{\alpha}{n}\right)^n} \ge \lim_{\alpha \to 0} c_2(\alpha) = \frac{1-\left(1-\frac{\alpha_2}{n}\right)}{\alpha_2} \ge \frac{1-\e^{-\alpha_2}}{\alpha_2}\enspace,
\]
since for every given $n \ge 1$ and every $\alpha > 0$, the ratio $\alpha/(1-(1-\alpha/n)^n) > 1$, because $\alpha \ge 1-(1-\alpha/n)^n$ by concavity of the latter function. 

Now for $\alpha \in I_1$, we consider the case with a positive test on $\tau_1$ but not on $\tau_2$. In this case, the box has a value of at least $t$ with probability $q/q_1 = \alpha/\alpha_1$. Alternatively, if we see a positive test for $\tau_1$ and $\tau_2$, the algorithms selects a box with a value of at least $t$ with probability 1. Overall, for $\alpha \in I_1$
\begin{align*}
    A(\alpha) &= \frac{\alpha}{\alpha_1} \cdot \Pr[\calE_{10}] + \Pr[\calE_{11}] \; = \; \Pr[\calE_1] - \left(\frac{\alpha_1 - \alpha}{\alpha_1}\right)\cdot \Pr[\calE_{10}]\\
    &= 1-\left(1- \frac{\alpha_1-\alpha}{\alpha_1-\alpha_2}\right) \left(1-\frac{\alpha_1}{n}\right)^n - \frac{\alpha_1-\alpha}{\alpha_1 - \alpha_2} \left(1-\frac{\alpha_2}{n}\right)^n \\
    &= c_1(\alpha) \cdot \left(1-\left(1-\frac{\alpha}{n}\right)^n\right),
\end{align*}
Since $\alpha \in [\alpha_2,\alpha_1]$ is a constant,
\[
    \lim_{n \to \infty} c_1(\alpha) = 
    \frac{1}{1-\e^{-\alpha}} \cdot \left(1 - \e^{-\alpha_1} - \frac{(\alpha_1 - \alpha) (\e^{-\alpha_2} - \e^{-\alpha_1})}{\alpha_1 - \alpha_2}\right).
\]

Finally, for $\alpha \in I_0$, we see that
\begin{align*}
A(\alpha) = \frac{\alpha - \alpha_1}{n-\alpha_1} (1-\Pr[\calE_1]) + \Pr[\calE_1] 
          &= \frac{\alpha - \alpha_1}{n-\alpha_1} \left(1-\frac{\alpha_1}{n}\right)^{n} + \left(1-\left(1-\frac{\alpha_1}{n}\right)^n\right) \\
          &= c_0(\alpha)\cdot \left(1-\left(1-\frac{\alpha}{n}\right)^n\right)
\end{align*}
Thus,
\begin{align*}
c_0(\alpha) \ge \frac{1-\left(1-\frac{\alpha_1}{n}\right)^n}{1-\left(1-\frac{\alpha}{n}\right)^n} \ge \frac{1-\e^{-\alpha_1}}{1-\e^{-\alpha}}
\end{align*}
where the latter bound holds for any $n \ge 1$ and any constant $\alpha$. Indeed, when $\alpha \in \omega(1)$, we obtain a bound of $1-\e^{-\alpha}$ in the limit for $n \to \infty$.

As a sanity check, observe that $c_1(\alpha_1) = c_0(\alpha_1) = 1$. Indeed, suppose we have a box with value $t \ge \tau_1$. Then either this box is tested positive for $\tau_1$, or some other box was tested positive for $\tau_1$ before. In either case, the algorithm indeed selects a box of value at least $\tau_1$. Similarly, observe that $c_2(\alpha_2) = 1$ as well. Indeed, suppose we have a box with value $t \ge \tau_2$. Suppose (1) this box is tested positive for $\tau_2$. Then it is selected. Suppose (2) the box is tested positive for $\tau_1$. Then it is selected, unless some later box is tested positive for $\tau_2$. Either way, we eventually obtain a value of at least $\tau_2$. Finally, suppose (3) the box is not tested at all. Then we have already selected a box of value at least $\tau_2$ before.

To obtain the best ratio, we strive to select constants $0 < \alpha_2 < \alpha_1$ in order to
\[
    \max_{\alpha_1,\alpha_2} \; \{ \min_{\alpha \in I_2} c_2(\alpha), \, \min_{\alpha \in I_1} c_1(\alpha), \, \min_{\alpha \in I_0} c_0(\alpha) \}
\]

For $c_2(\alpha)$ and $c_0(\alpha)$ we obtain fairly clear lower bounds, which even hold pointwise for any $n$. It seems unpromising to obtain an insightful analytic formula for the minimum of $c_1(\alpha)$ as a function of $\alpha_1$ and $\alpha_2$. Instead, we numerically optimized parameters $\alpha_1,\alpha_2$ and used standard solver software to minimize $c_1(\alpha)$. The lower bounds for $c_2$ and $c_0$ then amount to $(1-\e^{-0.35932})/0.35932 \ge 0.8400637\ldots$ and $1-\e^{1.83298} \ge 0.8400564\ldots$. The minimum of $\lim_{n \to \infty} c_1(\alpha)$ is located roughly at $\alpha^* \approx 0.832961265\ldots$ with a value for $c_1(t) = 0.8400569\ldots$ For a plot of the ratios see Figure~\ref{fig:2plot}.
\input{fig_2plot}

Along similar lines, we analyze the case with $k=3$ thresholds in the appendix, which yields a ratio of at least $0.869-o(1)$ (see Table~\ref{tab:iid-results}). Based on similar calculations, we also numerically optimized the case with $k=4$, but we see only very slight improvements. Intuitively, the probability to reach a state with positive tests for all $k$ thresholds becomes extremely small. Increasing this probability requires to decrease the value to be tested for in the first $k-1$ tests. However, the possibility to obtain an improvement in this way seems to vanish very quickly as $k$ grows larger. We conjecture that for all values of $k$, we cannot significantly improve the competitive ratio beyond 0.869 as $n \to \infty$.
\end{proof}

Observe that the analysis of our algorithms is tight. Consider the value of $\alpha'$ that yields the minimum of all $c_i(\alpha)$ in the respective intervals $I_i$. For a ``golden nugget''-distribution, where each $X_i$ has value 1 with probability $\alpha'/n$ and 0 otherwise, the above calculations become exact, and the analysis of the competitive ratio becomes tight. While, strictly speaking, this golden-nugget distribution is discrete, it is straightfoward to approximate it arbitrarily closely by a continuous distribution.

\section{Discrete Distributions}

\label{sec:discrete}

Let us shift attention from a continuous distribution to a finite discrete distribution $F$. We assume $F$ is represented in straightforward form as a list of (value, probability) pairs. We denote by $m$ the number of distinct realizations, and we use $v_1 < v_2 < \ldots < v_m$ to denote the support of $F$.

Observe that w.l.o.g.\ we only need to test for these values $v_j$. If we test for a threshold $t$ in between two consecutive $v_j < t \le v_{j+1}$, we obtain the same result by testing for $t = v_{j+1}$ instead. As such, we restrict to tests for values in the support.

\subsection{Testing Algorithms}
\label{sec:optimal}

We first observe that an optimal testing algorithm can be computed in polynomial time. Moreover, we show that this algorithm yields a competitive ratio of $0.869-o(1)$.

\begin{theorem}
    \label{thm:dynprog}
    For finite discrete distributions, an optimal testing algorithm can be computed by dynamic programming in polynomial time.
\end{theorem}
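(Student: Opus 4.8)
The plan is to exhibit a dynamic program over a state space of polynomial size. As already noted, it is without loss of generality to use only thresholds from the support $v_1 < \dots < v_m$, and also without loss of generality to make the terminal selection greedily (pick the box of maximum conditional expectation), so an algorithm is determined by its choice of threshold for box $i$ as a function of the feedback seen on boxes $1,\dots,i-1$. Precompute from $F$, in $O(m)$ time, the quantities $G(v_\ell)=\Pr[X\ge v_\ell]$, $a^+_\ell=\mathbb{E}[X\mid X\ge v_\ell]$, and $a^-_\ell=\mathbb{E}[X\mid X<v_\ell]$. The structural claim driving the DP is that the only information an algorithm needs to carry forward is the pair $(i,s)$, where $i$ is the index of the box about to be tested and $s$ is the best conditional expectation among boxes $1,\dots,i-1$, and moreover that $s$ always lies in the finite set $S=\{a^+_\ell:\ell\in[m]\}\cup\{a^-_\ell:\ell\in[m]\}$ of size at most $2m$: testing box $j$ with threshold $v_\ell$ assigns it conditional expectation $a^+_\ell$ or $a^-_\ell$, so the running maximum over $j<i$ is a maximum of elements of $S$ and hence again lies in $S$.

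Next I would define $V(i,s)$ as the supremum over adaptive strategies for boxes $i,\dots,n$ of the expected final reward when the best conditional expectation among $1,\dots,i-1$ equals $s$. Since the $X_i$ are i.i.d.\ and independent, conditioned on state $(i,s)$ the continuation is stochastically identical regardless of the history leading there, so the principle of optimality gives, for $i\le n$,
\[
V(i,s)=\max_{\ell\in[m]}\Big(G(v_\ell)\,V\big(i+1,\max(s,a^+_\ell)\big)+\big(1-G(v_\ell)\big)\,V\big(i+1,\max(s,a^-_\ell)\big)\Big),
\]
with base case $V(n+1,s)=s$, because the greedy terminal selection picks a box whose expected value is exactly the accumulated conditional expectation $s$. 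For the first box one uses the same recursion with the convention $\max(-\infty,a)=a$, and the optimal value is $V(1,-\infty)$; recording a maximizing $\ell$ for each $(i,s)$ yields an optimal (and, since each maximum is attained, deterministic) testing strategy.

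For the complexity bound: the table has $O(nm)$ entries, each is a maximum over $O(m)$ candidate thresholds, and each transition takes $O(1)$ after the $O(m)$ preprocessing, so the computation runs in $O(nm^2)$ time, which is polynomial in the size of the input (the list of $m$ value-probability pairs together with $n$). This proves the theorem; the competitive-ratio statement for this algorithm is then obtained separately by specializing the continuous analysis of Section~\ref{sec:algo} to a sufficiently fine discrete approximation (or, conversely, approximating a discrete instance by a continuous one), so I would handle it as a corollary rather than inside this proof.

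The step I expect to be the only genuine obstacle is the justification that $(i,s)$ is a sufficient statistic, i.e.\ that no optimal strategy can profit from remembering which thresholds were used, which box attained the current best, or any finer detail of past outcomes. The argument is that the final reward equals $\max(s,\,M)$ where $M$ is the largest conditional expectation produced among boxes $i,\dots,n$, and by independence the distribution of $M$ under any continuation strategy depends on the history only through $i$ (and the strategy itself, which we are optimizing, may legitimately depend on $s$). Beyond this, the proof is bookkeeping: carefully confirming the reduction to support thresholds and to greedy selection, handling the first-box boundary case, and verifying the claimed state count and per-state cost.
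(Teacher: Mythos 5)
Your proof is correct and takes essentially the same approach as the paper's: both use a dynamic program over states $(i, s)$ where $s$ is the best conditional expectation accumulated so far, observe that $s$ ranges over at most $2m$ values of the form $\mathbb{E}[X\mid X\ge v_\ell]$ or $\mathbb{E}[X\mid X<v_\ell]$, and optimize over the $m$ thresholds per state by backwards induction for an $O(nm^2)$-time computation. Your write-up is in fact a bit more explicit than the paper's in justifying that $(i,s)$ is a sufficient statistic via the i.i.d.\ structure, but the underlying argument is identical.
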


\begin{proof}
    We use backwards induction. Consider the last test of box $n$. Clearly, given the previously tested boxes $1,\ldots,n-1$, we can restrict attention to the one with the highest conditional expectation. We denote this value by $V_{n-1}^*$. Since each box is tested for exactly one of the $m$ realizations, there are $2m$ different possibilities for $V_{n-1}^*$. There are $m$ possible tests for box $n$. We can enumerate all the $2m^2$ combinations. For each value of $V_{n-1}^*$, the optimal test of box $n$ is the one leads to the highest expected value of the chosen item. Thus, to determine and describe the optimal decision for box $n$, we only need to consider $2m$ options of $V_{n-1}^*$, and for each option we record the best of the $m$ possible tests for box $n$.

    For the induction, let $V_{i-1}^*$ and $V_i^*$ be the conditional expectation of the best tested box before and after testing box $i$, resp. Suppose that for each possible value of $V_i^*$, we have computed an optimal testing strategy for subsequent boxes $i+1,\ldots,n$, along with the resulting expected value of $X_{\sigma}$. Now for box $i$, consider each of the $2m$ possible values for $V_{i-1}^*$. For each realization $v_k$, we can determine the effect when we test box $i$ for $v_k$ -- i.e., the probability that $V_i^* = V_{i-1}^*$ (when the test on $i$ implies the conditional expectation of $i$ is at most $V_{i-1}^*$), as well as the probability that $V_i^*$ becomes any higher value (otherwise). For the resulting $V_i^*$, we inspect the value obtained by an optimal testing strategy for boxes $i+1,\ldots,n$. This serves to find the test of box $i$ resulting in the optimal expected value.
 
    Overall, to determine and describe the optimal decision for box $i$, we need to consider $2m$ options of $V_{i-1}^*$, and for each option we determine the best of the $m$ possible tests for box $i$ (using the results of the subsequent optimal testing strategy for boxes $i+1,\ldots,n$). Finally, for box 1 $V_0^*$ is undefined. At this point, we only need to find the best of the $m$ possible tests for box 1 (using the results of the subsequent optimal testing strategy for boxes $2,\ldots,n$). This concludes the backwards induction.

    We record for each possible value $V_{i-1}^*$ the best threshold to test box $i$ along with the resulting expected value emerging from an optimal algorithm for boxes $i+1,\ldots,n$. Hence, we can describe an optimal testing strategy using $2 \cdot (1 + (n-1)\cdot2m))$ entries. This strategy can be computed in polynomial time via dynamic programming as described above.
\end{proof}

At this point, it is unclear how to apply our algorithm from the previous section to finite discrete distributions since $F^{-1}(1-q)$ may not be defined for the relevant values of $q$. In fact, we will show that the optimal algorithm in Theorem~\ref{thm:dynprog} achieves a competitive ratio of at least $0.869 - o(1)$ for finite discrete distributions.

We consider the following different model for testing discrete distributions, called \emph{probability testing}. It can be viewed as the limit that emerges from approximating discrete with continuous distributions arbitrarily close. Here a test requires an input value $q \in [0,1]$. It then returns whether or not the value $v$ in the box lies in the top-$q$ fraction of probability mass of $F$. For a finite discrete distribution $F$, let $k$ be such that $\sum_{j=k+1}^m \Pr[v = v_j] < q \le \sum_{j=k}^{m} \Pr[v = v_j]$. Then the test is positive for $v \in \{v_{k+1},\ldots,v_m\}$ and negative for $v \in \{v_1,\ldots,v_{k-1}\}$. For $v = v_k$ it yields a randomized outcome, i.e., positive with probability $p_q = \left(q - \sum_{j=k+1}^m \Pr[v = v_j]\right)/\Pr[v = v_k]$ and negative otherwise. Hence, the overall probability that box $i$ is tested positive is exactly $q$.

Clearly, our algorithm from Section~\ref{sec:algo} can be implemented with probability testing and obtains a competitive ratio of $0.8969-o(1)$. Probability and threshold testing are equivalent for continuous distributions, since there is a bijection between thresholds and values for $q$. For finite discrete distributions we observe in Proposition~\ref{prop:opt-real} that any algorithm for probability testing can be simulated using randomized threshold tests. We then show that randomized tests are not beneficial, i.e., for any algorithm with randomized threshold tests, there is one with \emph{deterministic} tests performing at least as good.

\begin{proposition}\label{prop:opt-real}
	If there is a $c$-competitive algorithm for probability testing, then there is a $c$-competitive algorithm for threshold testing.
\end{proposition}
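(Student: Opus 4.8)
The plan is to show that any $c$-competitive probability-testing algorithm $B$ can be simulated \emph{exactly} by a threshold-testing algorithm $A$ that is allowed to randomize its choice of threshold (which is permitted by the definition of a testing algorithm), and that $A$ then has the same expected reward as $B$. Throughout, $A$ knows $F$, hence the support $v_1<\dots<v_m$ and all probabilities, so all quantities below are available to it.

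The simulation is purely local. Suppose $B$, based on the feedback seen so far, would test box $i$ with parameter $q\in[0,1]$; let $k$ be the index with $\sum_{j=k+1}^{m}\Pr[v=v_j]<q\le\sum_{j=k}^{m}\Pr[v=v_j]$ and let $p_q=\bigl(q-\sum_{j=k+1}^m\Pr[v=v_j]\bigr)/\Pr[v=v_k]$ as in the definition of probability testing. Then $A$ tests box $i$ with threshold $v_k$ with probability $p_q$ and with threshold $v_{k+1}$ with probability $1-p_q$ (using the convention that a threshold exceeding $v_m$ always returns ``negative'', which covers $k=m$ and the degenerate values $q=0$, $p_q=0$; the cases $p_q=1$ and $q=1$ are handled by simply testing threshold $v_k$, with $v_1$ or any value $\le v_1$ if needed).

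The key step is to check that this substitution preserves distributions. Conditioned on the realization $X_i$: if $X_i\ge v_{k+1}$, the test is positive under either threshold, hence positive with probability $1$, matching probability testing; if $X_i\le v_{k-1}$, it is negative with probability $1$, again matching; and if $X_i=v_k$, it is positive exactly when $A$ happened to pick threshold $v_k$, i.e.\ with probability $p_q$, as in $B$. So the feedback bit has the same conditional law given $X_i$, and therefore also the posterior distribution of $X_i$ given the feedback agrees with the one $B$ reasons about. Since both models communicate only a single positive/negative bit per test, $A$ can maintain a faithful copy of $B$'s internal state; an induction over the $n$ boxes then shows that the joint law of $(\text{all feedback bits},X_1,\dots,X_n)$ after all $n$ tests is identical under $A$ and under $B$. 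Letting $A$ select the same box $\sigma$ that $B$ would select yields $\mathbb{E}_A[X_\sigma]=\mathbb{E}_B[X_\sigma]$, and since the law of $\max_i X_i$ does not depend on the algorithm, $\mathbb{E}_B[X_\sigma]\ge c\cdot\mathbb{E}[\max_i X_i]$ gives that $A$ is $c$-competitive for threshold testing.

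The only real work is bookkeeping: making the coupling between $A$ and $B$ precise and ensuring $A$'s thresholds are always well defined in the boundary cases of the probability-testing definition. There is no conceptual obstacle here, because the feedback alphabets coincide and the substitution is exact rather than approximate; in particular, unlike in Proposition~\ref{prop:gen-binary}, no extra box is needed.
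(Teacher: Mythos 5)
Your proposal is correct and its core — the local simulation of a probability test by a randomized threshold test with the matching conditional law of the feedback bit — is exactly the first half of the paper's proof. Where you diverge is that you stop there: since the model explicitly permits internal randomization (see the definition of $\sigma$ in the preliminaries), a randomized threshold-testing algorithm already witnesses the proposition as stated, and your coupling argument (same joint law of feedback bits and realizations, hence same $\mathbb{E}[X_\sigma]$) is a clean and complete justification. The paper does not stop there; it continues with a backwards-induction derandomization, exploiting the fact that the threshold tester can additionally observe which of $v_k,v_{k+1}$ was actually queried and can only improve by deterministically choosing the better branch. This establishes the strictly stronger fact that a \emph{deterministic} threshold-testing algorithm achieves the same ratio. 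That extra strength is not needed for the proposition as worded, but it is what makes the subsequent corollary (that the deterministic DP optimum of Theorem~\ref{thm:dynprog} is $(0.869-o(1))$-competitive) follow immediately rather than via a separate appeal to the standard fact that deterministic algorithms are optimal for such sequential decision problems. In short: your route is more economical and suffices for the statement; the paper's route proves more and keeps the downstream corollary self-contained.
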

\begin{proof}
Consider a probability test with parameter $q$ and let $k$ be such that $\sum_{j=k+1}^m \Pr[v = v_j] < q \le \sum_{j=k}^{m} \Pr[v = v_j]$. The outcome of the test can be simulated by executing a randomized threshold test: We test for $v_k$ with probability $p_q$ and for $v_{k+1}$ otherwise. Then the randomized threshold test is positive for $v \in \{v_{k+1},\ldots,v_m\}$ and negative for $v \in \{v_1,\ldots,v_{k-1}\}$. For $v = v_k$ it is positive with probability $p_q$. Hence, it represents a valid simulation of the probability test.

Consider any algorithm $A_p$ for probability testing. It can be interpreted as a randomized algorithm $A^r_t$ for threshold testing. Formally, when $A_p$ uses a probability test on box $i$, $A^r_t$ uses the equivalent randomized threshold test. $A_p$ learns the outcome of the test but not the information which of the two threshold tests was chosen to generate the result. Clearly, by ignoring the latter information, we can generate the same behavior using $A^r_t$.

Let us describe an iterative adjustment to obtain a \emph{deterministic} algorithm $A^d_t$ that achieves a better performance than $A^r_t$ (and, thus, $A_p$). First, suppose we have tested all the boxes. Then $A^r_t$ chooses the box with the highest conditional expectation. Now suppose $A^d_t$ behaves exactly like $A^r_t$, but right before choosing the box it is allowed to see the information about which threshold test was actually applied to each box. This can only lead to a better choice. Thus, $A^d_t$ gets to see the threshold for which box $n$ was tested.

We now apply a similar argument using backwards induction. Suppose $A^d_t$ tests exactly like $A^r_t$ until box $i-1$. Consider the randomized threshold test applied to box $i$ by $A^r_t$ with value $q$. When applying this randomized test, $A^d_t$ sees the threshold for which $i$ was tested ($v_k$ or $v_{k+1}$). By the inductive hypothesis, we assume that there is an optimal testing algorithm using only deterministic threshold tests for subsequent boxes $i+1,\ldots,n$.

Clearly, the value of a randomized test merely represents a convex combination of the values obtained by testing either $v_k$ or $v_{k+1}$, along with the optimal testing of subsequent boxes. As such, $A^d_t$ can only improve by choosing the better of the two tests deterministically. It obtains a deterministic testing strategy for boxes $i,\ldots,n$ that is at least as good as that of $A^r_t$. Consequently, there is an optimal algorithm for boxes $i,\ldots,n$ that uses deterministic threshold tests. This proves the inductive hypothesis.

Overall, this shows that for any algorithm $A_p$ for probability testing, we can obtain an equivalent algorithm $A^r_t$ for randomized threshold testing, whose performance is dominated by a deterministic algorithm for threshold testing.
\end{proof}

\begin{corollary}
    The optimal algorithm for finite discrete distributions is at least $(0.869-o(1))$-competitive for threshold testing.
\end{corollary}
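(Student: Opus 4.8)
The plan is to obtain the bound by composing three facts already available in the excerpt. First, the quantile-based algorithm analysed in the proof of Theorem~\ref{thm:iid-alg} can be run against a finite discrete distribution in the \emph{probability-testing} model: instead of the threshold $F^{-1}(1-q_j)$ one simply uses the probability parameter $q_j$, which carves off exactly the top $q_j$ fraction of the mass. I would first verify that the analysis of Theorem~\ref{thm:iid-alg} transfers essentially verbatim to this setting. That analysis takes place entirely in ``quantile space'': every quantity it uses — the probability $\Pr[\calE_{10}]$ (and its relatives) of seeing a prescribed pattern of positive tests, the conditional probability $q/q_j$ that the selected box exceeds a given level, and the complementary CDF $G_m$ of the maximum — depends only on how probability mass is partitioned by the tests, not on the concrete realisations. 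Hence one again gets $A(\alpha)\ge c_i(\alpha)\,G_m(\alpha)$ with the same functions $c_i$, and with the parameters from the $k=3$ row of Table~\ref{tab:iid-results} this is a $(0.869-o(1))$-competitive algorithm for probability testing, where the $o(1)$ is the same explicit function of $n$ and in particular uniform over all finite discrete $F$. (Alternatively, one can approximate $F$ by a sequence of continuous distributions $F_\eps$, apply Theorem~\ref{thm:iid-alg} to each, and let $\eps\to0$, checking that both $\mathbb{E}[X_\sigma]$ and $\mathbb{E}[\max_i X_i]$ converge; the direct route is cleaner.)

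Second, I would invoke Proposition~\ref{prop:opt-real}: a $(0.869-o(1))$-competitive algorithm for probability testing yields a $(0.869-o(1))$-competitive algorithm for (randomised, and then by the argument in that proof deterministic) threshold testing, with no loss in the guarantee. Third, by Theorem~\ref{thm:dynprog} there is an optimal threshold-testing algorithm for $F$: the space of deterministic threshold strategies is finite, so a maximiser of $\mathbb{E}[X_\sigma]$ exists, and it is the one the dynamic program computes. By optimality its expected reward is at least that of the algorithm produced in the previous step \emph{on the same instance} $F$; dividing both sides by $\mathbb{E}[\max_i X_i]$ preserves the inequality, so the optimal algorithm is itself $(0.869-o(1))$-competitive on $F$. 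Taking the infimum over all finite discrete $F$ gives the statement.

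The only genuine subtlety — and hence the step I would spend the most care on — is the first one: making precise that probability testing on a discrete distribution is exactly threshold testing ``as if'' the distribution were continuous, so that the piecewise analysis of Theorem~\ref{thm:iid-alg} (including the pointwise bounds on $c_0$ and $c_2$ and the numerical minimisation of $c_1$) is literally reusable, and that the error term can be written as a single function $o(1)$ of $n$ that does not depend on $F$. Everything after that is a one-line composition of statements already proven. It is worth flagging one harmless point: the randomised outcome for the boundary value $v_k$ in probability testing is precisely what lets the algorithm realise the fractional quantiles $q_j=\alpha_j/n$ that a deterministic threshold need not achieve, and Proposition~\ref{prop:opt-real} is the device that strips this randomisation away afterwards without loss.
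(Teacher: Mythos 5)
Your proposal matches the paper's argument: the paper itself derives this corollary as an immediate consequence of the three facts you cite --- that the algorithm of Theorem~\ref{thm:iid-alg} can be implemented with probability testing for finite discrete $F$ (asserted in the sentence preceding Proposition~\ref{prop:opt-real}), that probability testing reduces to (randomized and hence deterministic) threshold testing by Proposition~\ref{prop:opt-real}, and that the dynamic program of Theorem~\ref{thm:dynprog} computes an optimal threshold-testing algorithm, which must therefore do at least as well. Your elaboration of the ``quantile-space'' coupling (one uniform $U_i$ per box, test outcomes and the selection rule depending only on the $U_i$'s, so that the step-function pieces $c_i(\alpha)$ apply unchanged and the $o(1)$ is uniform in $F$) is the correct way to make rigorous what the paper calls ``clear,'' and is exactly the step on which to spend the care you flag.
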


\subsection{Impossibility}
Complementing our results in the previous subsection, we proceed to show a constant upper bound on the competitive ratio for $n \to \infty$.

\begin{theorem}
    \label{thm:impossible}
	There exists no $(1-o(1))$-competitive algorithm for threshold testing.
\end{theorem}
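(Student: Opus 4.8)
The plan is to exhibit one distribution $F=F_n$ depending on $n$ on which every testing algorithm is bounded away from $1$ as $n\to\infty$. Take $X_i$ equal to $1$, $2$, or $3$, each with probability $1/n$, and $0$ otherwise. Using $\mathbb{E}[\max\{X_1,\dots,X_n\}]=\sum_{v=1}^{3}\Pr[\max_i X_i\ge v]$ and $\Pr[\max_i X_i\ge v]=1-(1-v/n)^n$, one gets $\mathbb{E}[\max\{X_1,\dots,X_n\}]=3-(1-1/n)^n-(1-2/n)^n-(1-3/n)^n\to 3-\e^{-1}-\e^{-2}-\e^{-3}\approx 2.447$. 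On the support $\{0,1,2,3\}$, testing box $i$ for a threshold $t$ is equivalent to testing for $\lceil t\rceil$ when $t\in(0,3]$, reveals nothing when $t\le 0$, and is dominated by testing for $3$ when $t>3$; moreover, testing a box (rather than skipping it) and retaining more information can only help, even under the rule that the final pick maximizes the conditional expectation (a tower-property argument). Hence we may assume w.l.o.g.\ that the algorithm tests each box $i$ for a threshold $t_i\in\{1,2,3\}$, where $t_i$ is a (possibly randomized) function of the outcomes of boxes $1,\dots,i-1$, and then selects the box of highest conditional expectation.

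Next I would pin down $\mathbb{E}[X_\sigma]$ up to $o(1)$. The post-test conditional means are $\mathbb{E}[X\mid X\ge 3]=3$, $\mathbb{E}[X\mid X\ge 2]=\tfrac52$, $\mathbb{E}[X\mid X\ge 1]=2$, and $O(1/n)$ after any negative test; so among the surviving ``states'' a box positively tested for $3$ beats one positively tested for $2$, which beats one positively tested for $1$, which beats any negatively tested box. Writing $q_v:=\Pr[\sigma\text{ is positively tested for }v]$, this ordering gives $q_3=\Pr[\exists i:\,t_i=3,\ X_i=3]$, $q_3+q_2=\Pr[\exists i:\,t_i\in\{2,3\},\ X_i\ge t_i]$, and $q_3+q_2+q_1=\Pr[\exists i:\,X_i\ge t_i]$. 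The key point is that $t_i$ is independent of $X_i$ and the whole execution (hence $\{\sigma=i\}$) depends on $X_i$ only through the bit $\mathbf{1}[X_i\ge t_i]$; so, conditioned on $\{\sigma=i,\ t_i=v,\ X_i\ge v\}$, $X_i$ is distributed as $X\mid X\ge v$. Summing over $i$ yields the exact identity
\[
\mathbb{E}[X_\sigma]\;=\;3q_3+\tfrac52 q_2+2q_1+O(1/n)\;=\;\tfrac12\,q_3+\tfrac12\,(q_3+q_2)+2\,(q_3+q_2+q_1)+O(1/n).
\]

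The last step bounds the three cumulative probabilities. Let $N_v$ be the random number of boxes tested for threshold $v$, so $N_1+N_2+N_3=n$, and set $\gamma=\mathbb{E}[N_3]/n$, $\beta=\mathbb{E}[N_2]/n$, $\alpha=\mathbb{E}[N_1]/n$ (assuming, by passing to subsequences, that these converge; $\alpha+\beta+\gamma=1$). Conditioning step by step on the past and using $t_i\perp X_i$, $\Pr[\text{no box positively tested for }3]=\mathbb{E}[(1-\tfrac1n)^{N_3}]\ge(1-\tfrac1n)^{\mathbb{E}[N_3]}$ by convexity, so $q_3\le 1-\e^{-\gamma}+o(1)$; the same argument with a product over two thresholds gives $q_3+q_2\le 1-\e^{-\gamma-2\beta}+o(1)$ and $q_3+q_2+q_1\le 1-\e^{-\gamma-2\beta-3\alpha}+o(1)$. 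Plugging in, $\mathbb{E}[X_\sigma]\le 3-\tfrac12\e^{-\gamma}-\tfrac12\e^{-\gamma-2\beta}-2\e^{-\gamma-2\beta-3\alpha}+o(1)$, and maximizing the right-hand side over the simplex $\{\alpha,\beta,\gamma\ge 0,\ \alpha+\beta+\gamma=1\}$ is a small convex program whose (interior) optimum lies near $\gamma\approx 0.287$, $\beta\approx 0.347$, $\alpha\approx 0.366$, with value $\approx 2.187$. Hence $\mathbb{E}[X_\sigma]/\mathbb{E}[\max\{X_1,\dots,X_n\}]\le 2.187/2.447+o(1)<0.9$, which proves the theorem (indeed with an explicit constant below $0.894$).

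The main obstacle is making the adaptive conditioning fully rigorous: both the identity for $\mathbb{E}[X_\sigma]$ and the Jensen bounds hinge on cleanly separating ``which threshold a box receives'' (measurable w.r.t.\ the past) from ``the box's realization'', and on checking that the monotone ordering of states survives the extra conditioning on $\{\sigma=i\}$ — in particular that conditioning on selection does not skew the realization of the selected box beyond what its test bit already reveals. One must also verify the reductions (randomized thresholds, thresholds outside $\{1,2,3\}$, and skipping tests are all dominated) and confirm that the correlations among $N_1,N_2,N_3$ do not obstruct the multivariate Jensen step. The $\mathbb{E}[\max]$ computation and the final optimization are routine.
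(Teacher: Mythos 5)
Your choice of distribution ($1,2,3$ each with probability $1/n$, else $0$) is exactly the one the paper uses, your reductions (thresholds in $\{1,2,3\}$, randomization and skipping dominated) are fine, and the exact decomposition $\mathbb{E}[X_\sigma]=3q_3+\tfrac52 q_2+2q_1+O(1/n)$ is a genuinely nice observation whose conditional-independence justification is correct. However, the key inequality that feeds the final optimization is false.

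You claim $\Pr[\text{no box positively tested for }3]=\mathbb{E}\bigl[(1-\tfrac1n)^{N_3}\bigr]\ge(1-\tfrac1n)^{\mathbb{E}[N_3]}$, and hence $q_3\le 1-\e^{-\gamma}+o(1)$. The Jensen step is in the right direction, but the \emph{equality} is wrong for adaptive strategies: with $Z$ the indicator of ``no positive test for $3$'' and $p_i=\mathbf{1}[t_i=3]/n$ the conditional success probability at step $i$, the martingale $M_k=Z_k\cdot\prod_{i\le k}(1-p_i)^{-1}$ (where $Z_k$ is the indicator of no positive test among the first $k$) satisfies $\mathbb{E}[M_k\mid\mathcal{F}_{k-1}]=M_{k-1}$, so the true identity is $\mathbb{E}\bigl[Z/(1-\tfrac1n)^{N_3}\bigr]=1$, not $\mathbb{E}[Z]=\mathbb{E}[(1-\tfrac1n)^{N_3}]$. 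These coincide only when $N_3$ is deterministic or otherwise decoupled from $Z$. The only consequence one can extract from the correct identity (via $N_3\le n$) is the trivial $q_3\le 1-(1-1/n)^n\to 1-\e^{-1}$.

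Worse, your bound is contradicted by a concrete algorithm. Consider: test boxes for threshold $3$ one by one until the first positive test, then switch to threshold $1$. Here $q_3\to 1-\e^{-1}\approx 0.632$, while $\mathbb{E}[N_3]/n\to 1-\e^{-1}$, so your bound would assert $q_3\le 1-\e^{-(1-\e^{-1})}\approx 0.47$, which is violated. The intuition is that on the event $\{Z=1\}$ the algorithm \emph{never stops} and $N_3$ is maximal, so $Z$ and $(1-\tfrac1n)^{N_3}$ are positively coupled and the naive factorization badly overcharges the non-adaptive count $\mathbb{E}[N_3]$. The same defect affects your bounds for $q_3+q_2$ and $q_3+q_2+q_1$, and with it the entire final optimization over the simplex, so the proof as written does not establish the theorem.

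For contrast, the paper avoids this issue entirely by a direct case analysis on the algorithm's behaviour rather than a global budget argument: either the algorithm does not predominantly test for value $1$ before its first positive test, in which case with constant probability all tests are negative while $\max_i X_i\ge 1$; or it does, in which case it reaches a state with exactly one positive test (on $1$), and then either it does not sufficiently test for value $2$ (missing a box of value $\ge 2$ with constant probability) or it does (and then with constant probability it ends with positive tests on $1$ and $2$, picks the $2$-box, but the $1$-box actually holds the $3$). Each branch yields a constant-probability loss event, sidestepping any need for the kind of concentration/Jensen statement about $N_v$ that your approach requires but which fails under adaptivity. If you want to pursue your global decomposition, you would need a correct replacement for the Jensen step; the martingale identity $\mathbb{E}[Z/(1-\tfrac1n)^{N_3}]=1$ gives only $\Pr[Z=1]\le(1-\tfrac1n)^{\mathbb{E}[N_3\mid Z=1]}$, an upper bound with the conditioning on the \emph{wrong} event for your purposes.

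(A minor slip: the formula $\Pr[\max_i X_i\ge v]=1-(1-v/n)^n$ is incorrect — since $\Pr[X_i\ge v]=(4-v)/n$, it should read $1-(1-(4-v)/n)^n$ — but your summed expression for $\mathbb{E}[\max]$ and the limit $3-\e^{-1}-\e^{-2}-\e^{-3}$ are nonetheless correct.)
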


To prove the theorem, we are going to construct a counter example that is a discrete distribution, which carries over to the continuous case by the arguments given in Section~\ref{sec:discrete}. We first observe that such a distribution needs to depend on $n$: Otherwise, the top realization appears with constant probability in each box, and an algorithm simply testing for that realization finds it with probability $1-o(1)$. Furthermore, such a distribution needs to have a support of cardinality at least $4$: If the cardinality of the support is $3$, it is w.l.o.g.\ exactly $3$, and the algorithm can obtain $\max\{X_1,\dots,X_n\}$ by testing for the middle realization and, upon a positive test, testing for the top realization. If it finds a positive test on the top realization, it clearly obtains $\max\{X_1,\dots,X_n\}$ by choosing the corresponding box. If it finds a positive test on the middle realization, the corresponding box is the only one that can possibly contain the top realization, which the algorithm obtains by picking it, so it also obtains $\max\{X_1,\dots,X_n\}$. In the final case, $\max\{X_1,\dots,X_n\}$ is only the lowest realization, which the algorithm will also obtain by choosing any box.

We consider boxes that contain a realization $3$, $2$, or $1$ with probability $1/n$ each and $0$ otherwise. Intuitively, any algorithm that does not always test for the value $1$ before encountering a positive test runs the risk of missing a value $1$. Similarly, any algorithm that does not always test for the value $2$ afterwards and before encountering another positive test runs the risk of missing a value $2$. Such an algorithm, however, with constant probability, gets into a situation in which it has encountered precisely two positive tests, specifically, for the values $1$ and $2$. In that situation, it is clearly optimal to choose the box that has been positively tested for the value $2$. With a constant probability, the value of this box is, however, equal to $2$ while the one that has been tested positively for value $1$ is equal to $3$. The conclusion is that the algorithm, in any case, loses a constant fraction of $\mathbb{E}[\max\{X_1,\dots,X_n\}]$. In the following, we present a formal version of this argument.

\begin{proof}[Proof of Theorem~\ref{thm:impossible}.]
    Let $n\geq 3$. For each $i\in[n]$, let
    \begin{equation*}
        X_i =
        \begin{cases}
            3 & \text{w.p. }1/n,\\
            2 & \text{w.p. }1/n,\\
            1 & \text{w.p. }1/n,\\
            0 & \text{w.p. }1-3/n.
        \end{cases}
    \end{equation*}
    First note that
    \begin{align*}
        &\mathbb{E}[\max\{X_1,\dots,X_n\}] \\
        =\;& \left(1-\left(1-\frac1n\right)^n\right)\cdot 3 + \left(\left(1-\frac1n\right)^n-\left(1-\frac2n\right)^n\right)\cdot 2 +\left(\left(1-\frac2n\right)^n-\left(1-\frac3n\right)^n\right)\cdot 1\\
        =\;& \left(1-\frac1\e\right)\cdot 3 + \left(\frac1\e-\frac1{\e^2}\right)\cdot 2 + \left(\frac1{\e^2}-\frac1{\e^3}\right)\cdot 1+o(1)=\Theta(1).
    \end{align*}
    
    Now consider an arbitrary algorithm \alg. We denote by $\tau$ the possibly random index of the box that it chooses. As discussed above, we can assume w.l.o.g.\ that \alg only tests for the realizations $3$, $2$, and $1$. We first consider the initial phase in which \alg has not yet seen any positive test. Let $T$ be the (random) set of boxes tested in the initial phase, and thereof let $T_i$ be the boxes tested on value $i\in\{1,2,3\}$. Then 
    \begin{equation}
        |T| = |T_1| + |T_2| + |T_3|. \label{eq:lb-T-sum}
    \end{equation}
    Also note that we have
    \begin{equation}
        \mathbb{E}[|T|]\geq \sum_{i=1}^n \left(1-\frac3n\right)^{i-1}\geq \left(\frac{\e^3-1}{3\e^3}\right)\cdot n \label{eq:lb-lb-exp-T}
    \end{equation}
    because each test is positive with probability at most $3/n$.
    
    We first consider two cases with respect to the expected cardinality of $T_1$.
    \begin{enumerate}[itemindent=1cm]
        \item[Case 1:] It holds that $\mathbb{E}[|T_1|]\leq\frac{\e^3-1}{6\e^3}\cdot n$. Note that combining this with Eq.~\eqref{eq:lb-T-sum}, Eq.~\eqref{eq:lb-lb-exp-T}, and $|T|\leq n$  yields
        \begin{equation}
            \Pr\left[|T_2|+|T_3|\geq\frac{\e^3-1}{12\e^3}\cdot n\right]\geq\frac{\e^3-1}{1+11\e^3}. \label{eq:lb-prob-e2}
        \end{equation}
        
        We define several events:
        \begin{itemize}
            \item $\calE_1$ is the event that \alg only performs negative tests,
            \item $\calE_2$ is the event that indeed $|T_2|+|T_3|\geq\frac{\e^3-1}{12\e^3}\cdot n$,
            \item $\calE_3$ is the event that $\max\{X_1,\dots,X_n\}\geq 1$.
        \end{itemize}
        We denote by $\calE^\star$ the intersection of these three events. Note that
        \begin{align*}
            \Pr[\calE^\star]=&\;\Pr[\calE_1]\cdot\Pr[\calE_2\mid \calE_1]\cdot\Pr[\calE_3\mid \calE_1\cap\calE_2]\\
            >&\left(1-\frac3n\right)^n\cdot \frac{\e^3-1}{1+11\e^3}\cdot \left(1-\frac1n\right)^{\frac{\e^3-1}{12\e^3}\cdot n}\geq\delta_1,
        \end{align*}
        for some constant $\delta_1>0$, where we bound all the probabilities separately for the first inequality. For the first probability, we use that each test is negative with probability at least $1-3/n$ independently. For the second probability, we use Eq.~\eqref{eq:lb-prob-e2} and that $|T_1|$, $|T_2|$, and $|T_3|$ are largest when conditioned on $\calE_1$. Finally, for the third probability, we use that, conditioned on $\calE_1\cap\calE_2$, the probability that any variable in $T_2\cup T_3$ is nonzero with probability larger than $1/n$ independently.
        Observing that
        \begin{align*}
            \mathbb{E}[X_\sigma]&=\Pr[\overline{\calE^\star}]\cdot\mathbb{E}[X_\sigma\mid\overline{\calE^\star}]+\Pr[\calE^\star]\cdot\mathbb{E}[X_\sigma\mid\calE^\star]=\Pr[\overline{\calE^\star}]\cdot\mathbb{E}[X_\sigma\mid\overline{\calE^\star}]+o(1)
        \end{align*}
        while
        \begin{align*}
            &\mathbb{E}[\max\{X_1,\dots,X_n\}]\\
            =\;&\Pr[\overline{\calE^\star}]\cdot\mathbb{E}[\max\{X_1,\dots,X_n\}\mid\overline{\calE^\star}]+\Pr[\calE^\star]\cdot\mathbb{E}[\max\{X_1,\dots,X_n\}\mid\calE^\star]\\
            \geq\;&\Pr[\overline{\calE^\star}]\cdot\mathbb{E}[\max\{X_1,\dots,X_n\}\mid\overline{\calE^\star}]+\Pr[\calE^\star]\cdot 1,
        \end{align*}
        we apply $X_\sigma\leq\max\{X_1,\dots,X_n\}$ and obtain
        \begin{align*}
            \mathbb{E}[X_\sigma]\leq (1-\delta_1 + o(1))\cdot\mathbb{E}[\max\{X_1,\dots,X_n\}].
        \end{align*}
        \item[Case 2:] It holds that $\mathbb{E}[|T_1|]>\frac{\e^3-1}{6\e^3}\cdot n$. Recall that any test for the value $1$ is positive with probability $3/n$. Therefore, with probability at least $\frac{\e^3-1}{6\e^3}\cdot n\cdot \frac3n=\frac{\e^3-1}{2\e^3}$, \alg gets into a situation in which it has performed a single positive test and that test was a test on value $1$. Let $U$ be the (random) set of boxes tested by \alg in this situation (if \alg does not get into this situation, $U$ is empty). Similarly to the previous case, let $U_i$ be the set of these boxes tesed for value $i\in\{2,3\}$. We have
        \begin{equation}
        	|U| = |U_2| + |U_3| \label{eq:lb-T-sum-2}
        \end{equation}
        and
        \begin{align}
        	\mathbb{E}[|U|] \geq \frac{\e^3-1}{4\e^3}\cdot\sum_{i=1}^{\big\lfloor\frac{\e^3-1}{12\e^3}\cdot n\big\rfloor} \left(1-\frac{2}{n}\right)^{i-1}\;&\geq \left(\frac{(1-\e^3)(\e^{\frac{1-\e^3}{6\e^3}}-1)}{8\e^3}-o(1)\right)\cdot n \nonumber\\
        	&\geq \left(\frac{1}{100}-o(1)\right)\cdot n. \label{eq:lb-lb-exp-T-2}
        \end{align}
        For~\eqref{eq:lb-lb-exp-T-2}, we use that, with probability at least $\frac{\e^3-1}{4\e^3}$, \alg gets into the aforementioned situation no later than round $\lceil(1-\frac{\e^3-1}{12\e^3})\cdot n\rceil$ since at most a probability of $3/n$ can stem from each round. Further, any test performed in that situation is successful with probability at most $2/n$.
        
        We consider two cases with respect to the expected cardinality of $U_2$.
		\begin{enumerate}[itemindent=1cm]
			\item[Case 2a:] It holds that $\mathbb{E}[|U_2|]\leq\frac{n}{200}$. Similarly as in Case 1, by combining $|U|\leq n$ with Eq.~\eqref{eq:lb-T-sum-2} and Eq.~\eqref{eq:lb-lb-exp-T-2}, we get
	    \begin{equation}
            \Pr\left[|U_3|\geq \frac{n}{400}\right]\geq \frac{1}{399}-o(1). \label{eq:lb-prob-f2}
        \end{equation}
        
        We define several events:
        \begin{itemize}
            \item $\mathcal{F}_1$ is the event that \alg only performs a single positive test, namely on value $1$,
            \item $\mathcal{F}_2$ is the event that, if $\mathcal{F}_1$ occurs, then the realization of the corresponding box is $1$,
            \item $\mathcal{F}_3$ is the event that indeed $|U_3|\geq \frac{n}{400}$,
            \item $\mathcal{F}_4$ is the event that $\max\{X_1,\dots,X_n\}\geq 2$.
        \end{itemize}
        We denote by $\mathcal{F}^\star$ the intersection of these three events. Note that
        \begin{align*}
            \Pr[\mathcal{F}^\star]=&\;\Pr[\mathcal{F}_1]\cdot\Pr[\mathcal{F}_2\mid \mathcal{F}_1]\cdot\Pr[\mathcal{F}_3\mid \mathcal{F}_1\cap\mathcal{F}_2]\cdot\Pr[\mathcal{F}_4\mid \mathcal{F}_1\cap\mathcal{F}_2\cap\mathcal{F}_3]\\
            >&\frac{\e^3-1}{2\e^3}\cdot\left(1-\frac3n\right)^n\cdot\frac13\cdot\left(\frac{1}{399}-o(1)\right)\cdot \left(1-\frac1n\right)^{\frac{n}{400}}\geq\delta_2-o(1),
        \end{align*}
        for some constant $\delta_2>0$, where we bound all the probabilities separately for the first inequality. For the first probability, we use that the algorithm gets into the above situation with probability at least $\frac{\e^3-1}{2\e^3}$, and each of the less than $n$ subsequent tests is negative with probability at least $1-3/n$ independently. For the second probability, we observe that the value of any of the boxes, conditioned on it being at least $1$, is equal to $1$ with probability $1/3$. For the third probability, we use Inequality~\eqref{eq:lb-prob-f2} and that $|U_2|$ and $|U_3|$ are largest when conditioned on $\mathcal{F}_1$ (their sizes are independent of $\mathcal{F}_2$). Finally, for the fourth probability, we use that, conditioned on $\mathcal{F}_1\cap\mathcal{F}_2\cap\mathcal{F}_3$, any box in $U_3$ has value $2$ with probability larger than $1/n$ independently.
        Observing that
        \begin{align*}
            \mathbb{E}[X_\sigma]&=\Pr[\overline{\mathcal{F}^\star}]\cdot\mathbb{E}[X_\sigma\mid\overline{\mathcal{F}^\star}]+\Pr[\mathcal{F}^\star]\cdot\mathbb{E}[X_\sigma\mid\calE^\star]\\
            &=\Pr[\overline{\mathcal{F}^\star}]\cdot\mathbb{E}[X_\sigma\mid\overline{\mathcal{F}^\star}]+\Pr[\mathcal{F}^\star]\cdot 1
        \end{align*}
        while
        \begin{align*}
            &\mathbb{E}[\max\{X_1,\dots,X_n\}]\\
            =\;&\Pr[\overline{\mathcal{F}^\star}]\cdot\mathbb{E}[\max\{X_1,\dots,X_n\}\mid\overline{\mathcal{F}^\star}]+\Pr[\mathcal{F}^\star]\cdot\mathbb{E}[\max\{X_1,\dots,X_n\}\mid\mathcal{F}^\star]\\
            \geq\;&\Pr[\overline{\mathcal{F}^\star}]\cdot\mathbb{E}[\max\{X_1,\dots,X_n\}\mid\overline{\mathcal{F}^\star}]+\Pr[\mathcal{F}^\star]\cdot 2,
        \end{align*}
        we apply $X_\sigma\leq\max\{X_1,\dots,X_n\}$ and obtain
        \begin{align*}
            \mathbb{E}[X_\sigma]&\leq \mathbb{E}[\max\{X_1,\dots,X_n\}]-(\delta_2-o(1))\\
            &<\left(1-\frac{\delta_2}{3}-o(1)\right)\cdot \mathbb{E}[\max\{X_1,\dots,X_n\}].
        \end{align*}
			\item[Case 2b:] It holds that $\mathbb{E}[|U_2|]>\frac{n}{200}$. We define two events:
			\begin{itemize}
			    \item $\mathcal{G}_1$ is the event that \alg performs precisely two positive tests, namely on the values $1$ and $2$; call the corresponding realizations $X_a$ and $X_b$, respectively,
			    \item $\mathcal{G}_2$ is the event that $X_a=3$ and $X_b=2$.
			\end{itemize}
			We denote by $\mathcal{G}^\star$ the intersection of these two events. Then
			\begin{align*}
			    \Pr[\mathcal{G}^\star]&=\Pr[\mathcal{G}_1]\cdot\Pr[\mathcal{G}_2\mid \mathcal{G}_1]\\
			    &>\frac{1}{100}\cdot\left(1-\frac3n\right)^n\cdot\frac13\cdot \frac12>\delta_3
			\end{align*}
			for some constant $\delta_3>0$, where we again bound all the probabilities separately for the first inequality. For the first probability, we use that any test on the value $2$ is successful with probability $2/n$ and that each of the less than $n$ subsequent tests is negative with probability at least $1-3/n$ independently. For the second probability, we observe that $\Pr[X_i=3\mid X_i\geq 1]=1/3$ and $\Pr[X_i=2\mid X_i\geq 2]=1/2$ for any of the boxes independently. Observing that
        \begin{align*}
            \mathbb{E}[X_\sigma]&=\Pr[\overline{\mathcal{G}^\star}]\cdot\mathbb{E}[X_\sigma\mid\overline{\mathcal{G}^\star}]+\Pr[\mathcal{G}^\star]\cdot\mathbb{E}[X_\sigma\mid\mathcal{G}^\star]\\
            &=\Pr[\overline{\mathcal{G}^\star}]\cdot\mathbb{E}[X_\sigma\mid\overline{\mathcal{G}^\star}]+\Pr[\mathcal{G}^\star]\cdot 2
        \end{align*}
        while
        \begin{align*}
            &\mathbb{E}[\max\{X_1,\dots,X_n\}]\\
            =\;&\Pr[\overline{\mathcal{G}^\star}]\cdot\mathbb{E}[\max\{X_1,\dots,X_n\}\mid\overline{\mathcal{G}^\star}]+\Pr[\mathcal{G}^\star]\cdot\mathbb{E}[\max\{X_1,\dots,X_n\}\mid\mathcal{G}^\star]\\
            \geq\;&\Pr[\overline{\mathcal{G}^\star}]\cdot\mathbb{E}[\max\{X_1,\dots,X_n\}\mid\overline{\mathcal{G}^\star}]+\Pr[\mathcal{G}^\star]\cdot 3,
        \end{align*}
        we apply $X_\sigma\leq\max\{X_1,\dots,X_n\}$ and obtain
        \begin{align*}
            \mathbb{E}[X_\sigma]&\leq \mathbb{E}[\max\{X_1,\dots,X_n\}]-(\delta_3-o(1))\\
            &<\left(1-\frac{\delta_3}{3}-o(1)\right)\cdot \mathbb{E}[\max\{X_1,\dots,X_n\}].
        \end{align*}
		\end{enumerate}
    \end{enumerate}
    This completes the proof.
\end{proof}

\input{fig_comp}

We have verified numerically (by solving the dynamic program from Theorem~\ref{thm:dynprog}) that for this distribution the achievable competitive ratio decreases in $n$ in the interval $n=2,\ldots,1000$. For $n=1000$, the optimal competitive ratio is ca.\ $0.9799$ (computed with full precision). See Figure~\ref{fig:comp} for the results.

\section{Multiple Tests per Box}

In this section, we consider a setting with $n$ boxes and a budget of $n$ threshold tests. Each box can be tested an arbitrary number of times with different thresholds\footnote{Recall that nature draws \emph{initially a single} value $X_i \sim F$ inside each box $i$. All tests on the same box are evaluated accordingly. In other words, the results of multiple tests on the same box are always consistent with the single unknown $X_i$ drawn upfront.} as long as there are still tests available. We again assume continuous distributions and show the following result.

\begin{theorem}
    \label{thm:multiTest}
    There is an efficient $(1-o(1))$-competitive algorithm for threshold testing with multiple tests per box and a continuous distribution.
\end{theorem}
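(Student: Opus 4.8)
The plan is to use most of the budget for a single coarse ``screening'' round and the few remaining tests to pinpoint the maximum among the survivors exactly. Fix $g(n):=\lceil\sqrt{n}\rceil$, drop the last $g(n)$ boxes, and keep only $m:=n-g(n)$ of them. Fix $\beta:=\lceil\ln n\rceil$, the screening quantile $q:=\beta/m$, and the threshold $\tau:=F^{-1}(1-q)$. In \emph{Phase 1} the algorithm tests each kept box once at $\tau$; let $S$ be the set of boxes with a positive test and $P:=|S|$. In \emph{Phase 2}, if $1\le P\le 3\beta$, the algorithm spends $t:=\lfloor g(n)/(3\beta)\rfloor$ further tests on each box in $S$, performing a binary search on the quantile of that box \emph{conditioned on being at least $\tau$}: the $j$-th test of box $i$ asks whether $X_i$ exceeds the current conditional dyadic midpoint, so after $t$ tests the conditional quantile of $X_i$ is localized to one of the $2^{t}$ dyadic bands of width $2^{-t}$. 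The algorithm then outputs the box of $S$ whose band has the largest left endpoint (ties broken arbitrarily); if instead $P=0$ or $P>3\beta$ it outputs an arbitrary box. Phase 1 uses $m$ tests and Phase 2 at most $3\beta t\le g(n)=n-m$, so the budget of $n$ is respected, and everything is polynomial given the ability to evaluate the relevant quantiles of $F$ and of $F$ conditioned on $X\ge\tau$.

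For the analysis, let $i^\star$ be the (a.s.\ unique) index attaining $\max\{X_1,\dots,X_n\}$, and let $\mathcal{G}$ be the event that (i) $i^\star$ is kept, (ii) $P\le 3\beta$, (iii) $\max\{X_1,\dots,X_n\}\ge\tau$, and (iv) no two boxes of $S$ fall into the same dyadic band. On $\mathcal{G}$ the conditional quantiles of the boxes in $S$ are pairwise separated, so Phase 2 sorts $S$ exactly; since $i^\star\in S$ by (i) and (iii) and it has the largest value (hence the largest conditional quantile), its band is the unique highest one and the algorithm outputs $\sigma=i^\star$, i.e.\ $X_\sigma=\max\{X_1,\dots,X_n\}$. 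I claim $\Pr[\mathcal{G}]=1-o(1)$: event (i) fails with probability $g(n)/n=o(1)$ by symmetry; $P$ is dominated, up to an additive constant, by a $\mathrm{Bin}(m,q)$ variable of mean $\beta=\omega(1)$, so a Chernoff bound gives $\Pr[P>3\beta]=\e^{-\Omega(\beta)}=o(1)$; $\Pr[\max<\tau]=(1-q)^{n}\le\e^{-\beta}\le 1/n$; and given $P\le 3\beta$ the conditional quantiles of the boxes in $S$ are i.i.d.\ uniform on $[0,1]$, so a birthday bound gives collision probability at most $\binom{3\beta}{2}2^{-t}$, which is $o(1)$ because $t=\omega(\log^{2}n)$ while $\beta=O(\log n)$.

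The genuinely delicate step is passing from ``$X_\sigma=\max$ on $\mathcal{G}$'' to a bound on $\mathbb{E}[X_\sigma]$ without assuming a finite second moment of the maximum. We write $\mathbb{E}[X_\sigma]\ge\mathbb{E}[\max\{X_1,\dots,X_n\}\cdot\mathbb{1}_{\mathcal{G}}]=\mathbb{E}[\max\{X_1,\dots,X_n\}]-\mathbb{E}[\max\{X_1,\dots,X_n\}\cdot\mathbb{1}_{\overline{\mathcal{G}}}]$ and must show the last term is $o(\mathbb{E}[\max\{X_1,\dots,X_n\}])$. On $\{\max<\tau\}$ it is at most $\tau\cdot\Pr[\max<\tau]\le\tau/n\le 2\,\mathbb{E}[\max\{X_1,\dots,X_n\}]/n$, using $\mathbb{E}[\max\{X_1,\dots,X_n\}]\ge\tau\,\Pr[\max\ge\tau]\ge\tau/2$. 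For the rest of $\overline{\mathcal{G}}$ one conditions on the pair $(\max\{X_1,\dots,X_n\},i^\star)$: given these, the non-maximal kept boxes are i.i.d.\ from $F$ truncated above at $\max\{X_1,\dots,X_n\}$, so the number of them with value in $[\tau,\max\{X_1,\dots,X_n\}]$ is dominated, up to an additive constant, by a $\mathrm{Bin}(m,2q)$ variable \emph{uniformly} over all realizations with $\max\ge\tau$; this yields a uniform $o(1)$ bound on the failure of (ii) and a uniform $o(1)$ birthday bound on the failure of (iv), while the failure of (i) is uniformly $g(n)/n$ by symmetry. Hence $\mathbb{E}[\max\{X_1,\dots,X_n\}\cdot\mathbb{1}_{\overline{\mathcal{G}}\cap\{\max\ge\tau\}}]\le o(1)\cdot\mathbb{E}[\max\{X_1,\dots,X_n\}]$, and combining the two contributions gives $\mathbb{E}[X_\sigma]\ge(1-o(1))\,\mathbb{E}[\max\{X_1,\dots,X_n\}]$. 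The main obstacle is precisely this uniform-conditioning step; the remainder is bookkeeping about the test budget together with standard tail bounds.
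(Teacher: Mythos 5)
Your proposal follows essentially the same blueprint as the paper's proof: drop $o(n)$ boxes, run a single screening test at a threshold that the maximum exceeds with high probability, spend the saved budget on a per-box binary search in the conditional tail, and output the box in the highest resulting band. The parameters differ ($\lceil\sqrt n\rceil$ dropped and $q\approx\ln n/n$, versus $\lceil n^{2/3}\rceil$ and $n^{-2/3}$ with bands of width $n^{-2}$ in the paper), but this is not a genuinely different route. In places you are in fact more explicit than the paper: you include the event that $i^\star$ is kept, and you spell out why establishing $\Pr[X_\sigma=\max\mid\max=v]=1-o(1)$ for $v\ge\tau$ implies $\mathbb{E}[X_\sigma]\ge(1-o(1))\,\mathbb{E}[\max]$ via $\tau\le 2\,\mathbb{E}[\max]$.

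There is, however, a gap in exactly the step you flag as delicate. Your event (iv), that no two boxes of $S$ share a dyadic band, is stronger than the algorithm needs, and your claimed \emph{uniform} conditional birthday bound does not follow from the $\mathrm{Bin}(m,2q)$ domination you invoke. Conditioned on $(\max=v,i^\star)$, the conditional quantiles of $S\setminus\{i^\star\}$ are i.i.d.\ uniform on the subinterval $(G(v)/q,1]$, so the naive birthday bound reads $\binom{3\beta}{2}\big/\big((1-G(v)/q)2^t\big)$, which is not $o(1)$ uniformly — it blows up as $v\downarrow\tau$. To repair this you can either exploit the compensating fact that $|S\setminus\{i^\star\}|$ has expectation proportional to $1-G(v)/q$ and bound the expected number of collisions directly, or replace (iv) by the weaker and sufficient event that no box of $S\setminus\{i^\star\}$ lands in $i^\star$'s band; the latter is what the paper's event $\mathcal{E}_2$ amounts to, and its failure probability is at most $m\cdot\frac{q\,2^{-t}}{1-q}=o(1)$ uniformly over $v\ge\tau$, with no cancellation needed. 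Either fix makes your argument go through as intended.
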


\begin{proof}
In the first step, our algorithm discards the last $\lceil n^{2/3}\rceil$ boxes, losing only a $\lceil n^{2/3}\rceil/n$ fraction of the value. The remaining ones are tested for the threshold $F^{-1}(1-n^{-2/3})$. Let $P$ be the set of boxes that were tested positively. For each box $i \in P$, the algorithm next searches the integers $\{0,\dots,\lfloor n^{4/3}\rfloor\}$ to find the largest $j$ from this set such that the test for $F^{-1}(1-n^{-2/3}+j\cdot n^{-2})$ is positive. Then\footnote{We use the convention $F^{-1}(x)=F^{-1}(1)$ for $x>1$.},
$F^{-1}(1-n^{-2/3}+j/n^2) \leq X_i < F^{-1}(1-n^{-2/3}+(j+1)/n^2)$. Using a binary search, this requires $O(\log n)$ tests for each box $i \in P$. Using the result, we say that box $i$ is of type $j$. Since there are potentially up to $n$ boxes in $P$, the algorithm may well run out of tests during this process. Eventually, if the algorithm succeeds to determine the type of each box in $P$, it picks a box from $P$ with the highest type. If no such box exists, it is not unique, or the algorithm ran out of tests before determining the type of each box in $P$, it may choose an arbitrary box. 

To analyze our algorithm, we fix any $v\in[F^{-1}(1-n^{-2/3}),F^{-1}(1))$. We denote by $\calM_v$ the event that $\max\{X_1,\dots,X_n\}=v$, and by $X_\sigma$ the value obtained by the algorithm. Our goal is to show that, whenever an optimal box has such a \emph{high} value $v$, the probability that we choose an optimal box is
\begin{equation}\label{eq:mult-test-goal}
    \Pr[X_\sigma=v\mid\mathcal{M}_v]=1-o(1).
\end{equation}
Now we see an optimal box with a high value with probability
\[\Pr[\max\{X_1,\dots,X_n\} \geq F^{-1}(1-n^{-2/3})] = 1 - (1-n^{-2/3})^n=1-o(1),\] 
so proving Eq.~\eqref{eq:mult-test-goal} indeed suffices to prove the theorem. To show Eq.~\eqref{eq:mult-test-goal}, we define two additional events:
\begin{itemize}
    \item $\calE_1$ is the event that $|P|\leq n^{1/2}$ (in particular, this implies that the algorithm does not run out of tests for large-enough $n$),
    \item $\calE_2$ is the event that only a single box has the largest type.
\end{itemize}
Note that $\Pr[X_\sigma=v\mid\mathcal{M}_v]\geq \Pr[\calE_1\cap \calE_2\mid \mathcal{M}_v]$ since our algorithm chooses the box with the maximum value if $\mathcal{M}_v$, $\calE_1$, and $\calE_2$ occur. We finalize the argument by observing that, for large-enough $n$,
\begin{align*}
    \Pr[\calE_1\cap \calE_2\mid \mathcal{M}_v]&=\Pr[\calE_1\mid \mathcal{M}_v]\cdot\Pr[\calE_2\mid \mathcal{M}_v\cap\calE_1]\\
    &\geq \left(1-e^{-\frac{n^{1/3}}{3}}\right)\cdot \left(1-\frac{n^{-2}}{1-n^{-2/3}}\right)^{n}= 1-o(1).
\end{align*}
To see the inequality, we first notice that, for large-enough $n$, the event $\calE_1$ occurs if the number of boxes in $P$ apart from the one with realization $v$ is at most $2 n^{1/3}$. This allows us to bound the first probability using a one-sided multiplicative Chernoff bound with expected value $n^{1/3}$ and factor $2$. We bound the second probability by observing that, conditioned on $\calM_v$, the probability that a single box other than that with realization $v$ has the same type is at most $n^{-2}/(1-n^{-2/3})$. Here, $n^{-2}$ is an upper bound on the probability that an independently drawn value has any fixed type, and $1-n^{-2/3}$ is a lower bound on the probability that such a value is below $v$ (using that $v$ is a high value). The additional condition on $\calE_1$ does not increase the probability of $\mathcal{E}_2$. 
This shows Eq.~\eqref{eq:mult-test-goal} and thus completes the proof.
\end{proof}

It is rather straightforward to apply the insights from Sec.~\ref{sec:discrete} to show similar results for testing with multiple tests per box and a finite discrete distribution. Our algorithm in the proof of Theorem~\ref{thm:multiTest} can be cast as a \emph{sequential} testing algorithm: It tests boxes sequentially from box $1$ to $n-\lceil n^{2/3}\rceil$. For each box $i$ it applies tests to determine whether $i \in P$ or not, and then binary search the type of $i$ (or aborts when it runs out of tests). For finite discrete distributions, we can optimize over such sequential testing algorithms using backwards induction, much like in the proof of Theorem~\ref{thm:dynprog}. When considering box $i$, an optimal decision about the next test can be found by relying on three additional parameters. Apart from the best conditional expectation of a previous box $V_{i-1}^*$, we also consider the smallest realization for which we saw a positive test for $i$, the largest one for which we saw a negative test for $i$, as well as the number of tests we applied so far. These parameters sufficiently describe the current state of the system before applying the next test. Note that there is only a polynomial number of combinations of these parameters that need to be considered. Then the algorithm has up to $m$ possible options for the next test of box $i$ -- or $m$ possible options for the first test of box $i+1$, thereby concluding the testing of box~$i$. Hence, there are only polynomially many combinations that need to be considered to find the optimal decision for the current test (assuming that an optimal testing algorithm for the subsequent number of tests/boxes has already been computed via backwards induction).

We can also transfer the approximation guarantee for the algorithm from Theorem~\ref{thm:multiTest}. We apply the algorithm in the model with probability tests and interpret them as randomized threshold tests. By applying the arguments of Proposition~\ref{prop:opt-real} to the sequential model with multiple tests per box, we see that for every randomized threshold testing algorithm there is a deterministic one that performs at least as good. Overall, this yields the following corollary.

\begin{corollary}
    For finite discrete distributions, an optimal sequential testing algorithm for multiple tests per box can be computed by dynamic programming in polynomial time. It is at least $(1-o(1))$-competitive for threshold testing with multiple tests per box.
\end{corollary}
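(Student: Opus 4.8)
The plan is to prove the two assertions of the corollary in turn, in both cases by lifting tools already built in the paper. For the first — polynomial-time computability of an optimal \emph{sequential} testing algorithm — I would mirror the backwards-induction argument of Theorem~\ref{thm:dynprog}, but enlarge the state carried by the dynamic program, exactly along the lines sketched just before the corollary. At the moment the algorithm is about to issue a test, its relevant history is captured by: (i) the best conditional expectation $V^*$ among the boxes whose testing has been finalized (this summary suffices because at the end one simply picks the box of highest conditional expectation, and among finalized boxes only the maximum ever matters); (ii) the current box index $i$; (iii) the smallest support value $v_a$ for which box $i$ has so far tested positive; (iv) the largest support value $v_b$ for which box $i$ has so far tested negative; and (v) the number $t\le n$ of tests already spent. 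Since $V^*$ ranges over the $O(m)$ attainable conditional expectations, $v_a,v_b$ over the $m$ support values, and $t$ over $\{0,\dots,n\}$, the state space has polynomial size $O(m^3 n^2)$. From a state the algorithm has at most $m$ options for the next test of box $i$ plus one option that finalizes box $i$ and advances to box $i+1$; given the already-computed optimal value-to-go at each successor state, the locally optimal choice is found by enumeration, so the DP runs in polynomial time and outputs an optimal sequential strategy.

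For the second assertion I would argue that this optimal sequential strategy is at least as good as the concrete $(1-o(1))$-competitive strategy obtained by porting the algorithm from the proof of Theorem~\ref{thm:multiTest} to the discrete setting. The issue is that $F^{-1}(1-q)$ need not be defined for discrete $F$; the remedy is the \emph{probability testing} device of Section~\ref{sec:discrete}. Running the Theorem~\ref{thm:multiTest} algorithm with probability tests yields a $(1-o(1))$-competitive algorithm with multiple tests per box, since the only properties of the thresholds used in that proof are that a test for the top-$q$ mass is positive with probability exactly $q$ and that these events are independent across independent draws, both of which probability testing provides; in particular the Chernoff and union-bound estimates, and the $|P|\le n^{1/2}$ event controlling the test budget, go through verbatim. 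Then, replaying the reduction of Proposition~\ref{prop:opt-real} in the sequential multiple-tests model, each probability test is simulated by a randomized threshold test (testing $v_k$ with probability $p_q$, else $v_{k+1}$), producing a randomized sequential threshold-testing algorithm of the same expected value; and the same backwards-induction derandomization — the value of a randomized test is a convex combination of the values of its two constituent deterministic tests composed with an optimal continuation, so one of them is at least as good — shows there is a deterministic sequential threshold-testing algorithm of at least the same value. Since the DP computes the optimal such algorithm, it is $(1-o(1))$-competitive.

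The main obstacle I anticipate is verifying that the derandomization of Proposition~\ref{prop:opt-real} really goes through in the multiple-tests-per-box model, where the induction is over individual tests (with the extended state above) rather than over whole boxes. One must check that, conditioned on the current state, the outcome distribution of a randomized threshold test on box $i$ is still a convex combination of two deterministic outcomes: this holds because, given the state, the two constituent tests for $v_k$ and $v_{k+1}$ induce precisely the same split of the conditional law of $X_i$ that the probability test does, and the continuation value is determined without reference to which of the two was drawn, so the convexity argument applies at every test. A minor additional point, already essentially handled in the proof of Theorem~\ref{thm:multiTest}, is confirming that discarding the last $\lceil n^{2/3}\rceil$ boxes together with the $O(\log n)$ binary-search tests per box in $P$ fits within the $n$-test budget with probability $1-o(1)$; this is immediate from the bound $|P|\le n^{1/2}$ with high probability.
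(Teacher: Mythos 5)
Your proposal is correct and follows essentially the same route as the paper: the same enlarged dynamic-programming state (best conditional expectation among finalized boxes, current box index, tightest positive and negative test bounds on the current box, and the number of tests spent), and the same transfer of the $(1-o(1))$ guarantee by running the Theorem~\ref{thm:multiTest} algorithm with probability tests and then derandomizing via the argument of Proposition~\ref{prop:opt-real} applied test-by-test. Your explicit verification that the convex-combination derandomization still works when the induction is over individual tests rather than whole boxes is a detail the paper only asserts, so your write-up is, if anything, slightly more careful.
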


\section{Conclusion}

In this paper, we have initiated the study of threshold testing of i.i.d.\ random variables, a probing model with partial revelation and binary feedback. For non-adaptive algorithms, the model is essentially equivalent to the standard gambler's problem, and optimal performance is governed by the i.i.d.\ prophet inequality of approximately $0.745$. For adaptive algorithms, we obtain a testing algorithm with competitive ratio of $0.869$. This significantly outperforms $0.745$, proves that there is a substantial adaptivity gap, and reveals the structural difference of the adaptive problem. Moreover, we show a constant upper bound on the ratio achievable by any adaptive testing algorithm. In contrast, when we can (adaptively) apply multiple tests to a single box, it is possible to achieve even a ratio of $1-o(1)$.

There are many intriguing open problems arising from our work. Obviously, the current upper and lower bounds for the i.i.d.\ model are not tight. More generally, a simple argument similar to Observation~\ref{obs:non-adaptive} shows that free-order prophet inequalities~\cite{journals/corr/abs-2211-04145} transfer directly to non-adaptive threshold testing, even for non-i.i.d.\ boxes. It is an intriguing open problem whether these guarantees can be strictly improved using an adaptive testing algorithm. Can we obtain a ratio strictly larger than $0.745$ also for non-i.i.d.\ threshold testing?

In addition, there are many combinatorial versions of the problem that deserve attention, i.e., when the algorithm is allowed to select more than one box. Testing algorithms for, e.g., knapsack, matroid, or general downward-closed feasibility structures represent a natural and important direction for future research.

\paragraph*{Acknowledgements}

We gratefully acknowledge discussions with Daniel Schmand and Luca von der Brelie. We also thank Rishabh Dhiman, Divyanshu Agarwal, Ashish Chiplunkar, Utsav Krishna Jaiswal, and an anonymous reviewer for pointing out errors in the parameter choice in an earlier version of the proof of Theorem 4.

\bibliography{testing}

\appendix

\newpage

\section{Missing Proofs}
\label{app:proofs}

\subsection{Proof of Theorem~\ref{thm:iid-alg}}
\subparagraph{Three Thresholds.} 
We first analyze the probabilities of obtaining a box of sufficient value. As above
\begin{align*}
    \Pr[\calE_{10}] = \alpha_1 \cdot \frac{\left(1-\frac{\alpha_2}{n}\right)^n - \left(1-\frac{\alpha_1}{n}\right)^n}{\alpha_1 - \alpha_2} \enspace.
\end{align*}

We call $\calE_{110}$ the event that we have a positive test for $\tau_1$, $\tau_2$ but not for $\tau_3$, which happens with probability
\begin{align*}
\Pr[\calE_{110}] &= \sum_{\ell_1=0}^{n-2} (1-q_1)^{\ell_1} q_1 \cdot \left(\sum_{\ell_2 = \ell_1}^{n-2} (1-q_2)^{\ell_2-\ell_1} q_2 \cdot (1-q_3)^{n-2-\ell_2}\right) \\
&= q_1 q_2 \cdot (1-q_3)^{n-2} \cdot \sum_{\ell_1=0}^{n-2} \left(\frac{1-q_1}{1-q_2}\right)^{\ell_1} \cdot \sum_{\ell_2 = \ell_1}^{n-2} \left(\frac{1-q_2}{1-q_3}\right)^{\ell_2} \\
&= q_1 q_2 \cdot (1-q_3)^{n-1} \cdot \sum_{\ell_1=0}^{n-2} \left(\frac{1-q_1}{1-q_2}\right)^{\ell_1} \cdot \frac{1}{q_2 - q_3} \cdot \left( \left(\frac{1-q_2}{1-q_3}\right)^{\ell_1} - \left(\frac{1-q_2}{1-q_3}\right)^{n-1}\right) \\
&= q_1 q_2 \cdot \frac{(1-q_3)^{n-1}}{q_2 - q_3} \cdot \sum_{\ell_1=0}^{n-2} \left(\frac{1-q_1}{1-q_3}\right)^{\ell_1} \cdot \left( 1 - \left(\frac{1-q_2}{1-q_3}\right)^{n-1-\ell_1}\right) \\
&= q_1 q_2 \cdot \frac{(1-q_3)^{n-1}}{q_2 - q_3} \cdot \left(\frac{1- \left(\frac{1-q_1}{1-q_3}\right)^{n-1}}{1-\frac{1-q_1}{1-q_3}} - \left(\frac{1-q_2}{1-q_3}\right)^{n-1} \sum_{\ell_1=0}^{n-2} \left(\frac{1-q_1}{1-q_2}\right)^{\ell_1}\right)\\
&= q_1 q_2 \cdot \frac{(1-q_3)^{n-1}}{q_2 - q_3}  \left(\frac{1 - \left(\frac{1-q_1}{1-q_3}\right)^{n-1}}{1-\frac{1-q_1}{1-q_3}} - (1-q_2) \frac{(1-q_2)^{n-1}- (1-q_1)^{n-1}}{(1-q_3)^{n-1}\cdot (q_1 - q_2)}\right)\\
&= q_1 q_2 \cdot \left((1-q_3)\frac{(1-q_3)^{n-1} - (1-q_1)^{n-1}}{(q_1-q_3)(q_2-q_3)} - (1-q_2) \frac{(1-q_2)^{n-1}- (1-q_1)^{n-1}}{(q_1 - q_2)(q_2-q_3)}\right)\\
&= q_1 q_2 \cdot \frac{(q_2 - q_3)(1-q_1)^n - (q_1-q_3)(1-q_2)^n + (q_1-q_2)(1-q_3)^n}{(q_1-q_2)(q_1-q_3)(q_2-q_3)} \\
&= \alpha_1 \alpha_2 \cdot \frac{(\alpha_2 - \alpha_3)\left(1-\frac{\alpha_1}{n}\right)^n - (\alpha_1-\alpha_3)\left(1-\frac{\alpha_2}{n}\right)^n + (\alpha_1-\alpha_2)\left(1-\frac{\alpha_3}{n}\right)^n}{(\alpha_1-\alpha_2)(\alpha_1-\alpha_3)(\alpha_2-\alpha_3)} 
\end{align*}

We call $\calE_{111}$ the event that we have positive tests for $\tau_1$, $\tau_2$ and $\tau_3$. Moreover, $\calE_{11}$ is the event that we have a positive test for $\tau_1$ and $\tau_2$ (irrespective of what happens subsequently for $\tau_3$). 
\begin{align*}
\end{align*}

Now for $\alpha \in I_3$ we have three possibilities. In case event $\calE_{10}$ happens, the probability $A(\alpha)$ is $\alpha/\alpha_1$, for $\calE_{110}$ it is $\alpha/\alpha_2$ and for $\calE_{111}$ it is $\alpha/\alpha_3$. Overall
\begin{align*}
&A(\alpha) = \frac{\alpha}{\alpha_1} \Pr[\calE_{10}] + \frac{\alpha}{\alpha_2} \Pr[\calE_{110}] + \frac{\alpha}{\alpha_3} \Pr[\calE_{111}]\\
 &= \alpha \cdot \left(\frac{\Pr[E_{10}]}{\alpha_1} + \frac{\Pr[E_{110}]}{\alpha_2}  + \frac{\Pr[\calE_1] - \Pr[\calE_{10}] - \Pr[\calE_{110}]}{\alpha_3}\right)\\
 &= \frac{\alpha}{\alpha_3} \cdot \left(\Pr[E_1] - \left(\frac{\alpha_1 - \alpha_3}{\alpha_1}\right) \Pr[\calE_{10}] - \left(\frac{\alpha_2 - \alpha_3}{\alpha_2}\right) \Pr[\calE_{110}]\right)\\
 &= \frac{\alpha}{\alpha_3} \cdot \Bigg(1 - \left(1-\frac{\alpha_1}{n}\right)^n - \frac{\alpha_1 - \alpha_3}{\alpha_1-\alpha_2} \left(\left(1-\frac{\alpha_2}{n}\right)^n - \left(1-\frac{\alpha_1}{n}\right)^n\right)\\
 & \qquad- \frac{\alpha_1}{(\alpha_1-\alpha_2)(\alpha_1-\alpha_3)}\cdot \\
 & \qquad \quad \left((\alpha_2 - \alpha_3)\left(1-\frac{\alpha_1}{n}\right)^n - (\alpha_1-\alpha_3)\left(1-\frac{\alpha_2}{n}\right)^n + (\alpha_1-\alpha_2)\left(1-\frac{\alpha_3}{n}\right)^n\right)\Bigg) \\
 &= \frac{\alpha}{\alpha_3} \cdot \Bigg(1 - \left(1 - \frac{\alpha_1-\alpha_3}{\alpha_1-\alpha_2} + \frac{\alpha_1(\alpha_2-\alpha_3)}{(\alpha_1-\alpha_2)(\alpha_1-\alpha_3)}\right) \left(1-\frac{\alpha_1}{n}\right)^n \\
 & \qquad- \left(\frac{q_1-q_3}{q_1-q_2} - \frac{q_1}{q_1-q_2}\right) (1-q_2)^n - \frac{q_1}{q_1-q_3}(1-q_3)^n \Bigg)\\
 &= \frac{\alpha}{\alpha_3} \cdot \Bigg(1 - \left(1 + \frac{\alpha_3}{\alpha_1-\alpha_2} - \frac{\alpha_1}{\alpha_1-\alpha_3}\right) \left(1-\frac{\alpha_1}{n}\right)^n + \frac{\alpha_3}{\alpha_1-\alpha_2} \left(1-\frac{\alpha_2}{n}\right)^n \\
 &\qquad- \frac{\alpha_1}{\alpha_1-\alpha_3}\left(1-\frac{\alpha_3}{n}\right)^n \Bigg)\\
 &= \frac{\alpha}{\alpha_3} \cdot \Bigg(1 - \left(1-\frac{\alpha_1}{n}\right)^n + \frac{\alpha_3}{\alpha_1-\alpha_2} \left(\left(1-\frac{\alpha_2}{n}\right)^n - \left(1-\frac{\alpha_1}{n}\right)^n\right)\\
 &\qquad- \frac{\alpha_1}{\alpha_1-\alpha_3} \left(\left(1-\frac{\alpha_3}{n}\right)^n - \left(1-\frac{\alpha_1}{n}\right)^n\right)\Bigg)\\
 &= c_3(t) \cdot\left(1 - \left(1-\frac{\alpha}{n}\right)^n\right)
\end{align*}
Similarly as for $c_2(\alpha)$ for two thresholds, we observe that the dependence of $c_3(\alpha)$ on $\alpha$ is fully captured by the term $\alpha/(1-(1-\alpha/n)^n)$. For every given $n \ge 1$ and every $\alpha > 0$, this ratio is at least 1 as argued above. Hence, to obtain a lower bound we replace this term by 1. Applying a limit we see that
\[
\lim_{n \to \infty} c_3(\alpha) \ge \frac{\alpha}{\alpha_3} \cdot \frac{1}{1-\e^{-\alpha}} \cdot \left(1 - \e^{-\alpha_1} + \frac{\alpha_3 (\e^{-\alpha_2} - \e^{-\alpha_1})}{\alpha_1 - \alpha_2} - \frac{\alpha_1 (\e^{-\alpha_3} - \e^{-\alpha_1})}{\alpha_1 - \alpha_3} \right) 
\]

For $\alpha \in I_2$ we have
\begin{align*}
&A(\alpha) = \frac{\alpha}{\alpha_1} \Pr[\calE_{10}] + \frac{\alpha}{\alpha_2} \Pr[\calE_{110}] + \Pr[\calE_{111}]\\
&= \frac{\alpha}{\alpha_1} \Pr[\calE_{10}] + \frac{\alpha}{\alpha_2} \Pr[\calE_{110}] + (\Pr[\calE_1] - \Pr[\calE_{10}] - \Pr[\calE_{110}]\\
&= \Pr[\calE_1] - \frac{\alpha_1-\alpha}{\alpha_1} \Pr[\calE_{10}] - \frac{\alpha_2-\alpha}{\alpha_2} \Pr[\calE_{110}] \\
&= 1 - \left(1-\frac{\alpha_1}{n}\right)^n - \frac{\alpha_1-\alpha}{\alpha_1-\alpha_2}\left(\left(1-\frac{\alpha_2}{n}\right)^n - \left(1-\frac{\alpha_1}{n}\right)^n\right)\\
& \qquad- \frac{\alpha_1(\alpha_2-\alpha)}{(\alpha_1-\alpha_2)(\alpha_1-\alpha_3)(\alpha_2-\alpha_3)} \cdot \\
&\qquad \quad \left((\alpha_2 - \alpha_3)\left(1-\frac{\alpha_1}{n}\right)^n - (\alpha_1-\alpha_3)\left(1-\frac{\alpha_2}{n}\right)^n + (\alpha_1-\alpha_2)\left(1-\frac{\alpha_3}{n}\right)^n\right) \\
&= 1 - \left(1-\frac{\alpha_1-\alpha}{\alpha_1-\alpha_2} + \frac{\alpha_1(\alpha_2-\alpha)}{(\alpha_1-\alpha_2)(\alpha_1-\alpha_3)}\right) \left(1-\frac{\alpha_1}{n}\right)^n \\
&\qquad- \left(\frac{\alpha_1-\alpha}{\alpha_1-\alpha_2} - \frac{\alpha_1(\alpha_2-\alpha)}{(\alpha_1-\alpha_2)(\alpha_2-\alpha_3)}\right)\left(1-\frac{\alpha_2}{n}\right)^n - \frac{\alpha_1(\alpha_2-\alpha)}{(\alpha_1-\alpha_3)(\alpha_2-\alpha_3)}\left(1-\frac{\alpha_3}{n}\right)^n\\
&= 1 - \left(1-  \frac{\alpha(\alpha_1-\alpha_2+\alpha_3) - \alpha_1\alpha_3}{(\alpha_1-\alpha_2)(\alpha_2-\alpha_3)} + \frac{\alpha \alpha_1 - \alpha_1\alpha_2}{(\alpha_1-\alpha_3)(\alpha_2-\alpha_3)}\right) \left(1-\frac{\alpha_1}{n}\right)^n \\
&\qquad- \frac{\alpha(\alpha_1-\alpha_2+\alpha_3) - \alpha_1\alpha_3}{(\alpha_1-\alpha_2)(\alpha_2-\alpha_3)}\left(1-\frac{\alpha_2}{n}\right)^n + \frac{\alpha \alpha_1 - \alpha_1\alpha_2}{(\alpha_1-\alpha_3)(\alpha_2-\alpha_3)}\left(1-\frac{\alpha_3}{n}\right)^n\\
&= c_2(\alpha) \cdot \left(1 - \left(1 - \frac{\alpha}{n}\right)^n\right)
\end{align*}
and, thus, the limit $n\to \infty$ yields 
\begin{align*}
    \lim_{n \to \infty} c_2(\alpha) &= \frac{1}{1- \e^{-\alpha}} \cdot \Bigg(1 - \e^{-\alpha_1} - \frac{(\alpha (\alpha_1 - \alpha_2 + \alpha_3) - \alpha_1\alpha_3)(\e^{-\alpha_2} - \e^{-\alpha_1})}{(\alpha_1 - \alpha_2)(\alpha_1-\alpha_3)}\\
    &\qquad + \frac{(\alpha \alpha_1 - \alpha_1\alpha_2)(\e^{-\alpha_3} - \e^{-\alpha_1})}{(\alpha_1-\alpha_3)(\alpha_2-\alpha_3)}\Bigg)
\end{align*}

For $\alpha \in I_1$ we have
\begin{align*}
    A(\alpha) &= \frac{\alpha}{\alpha_1} \Pr[\calE_{10}] + \Pr[\calE_{110}] + \Pr[\calE_{111}] = \frac{\alpha}{\alpha_1} \Pr[\calE_{10}] + \Pr[\calE_{11}]
\end{align*}
similarly as above. Hence, we obtain
\[
    \lim_{n \to \infty} c_1(\alpha) = 
    \frac{1}{1-\e^{-\alpha}} \cdot \left(1 - \e^{-\alpha_1} - \frac{(\alpha_1 - \alpha) (e^{-\alpha_2} - \e^{-\alpha_1})}{\alpha_1 - \alpha_2}\right)
\]
as above. Similarly, 
\[
    c_0(t) \ge \frac{1-\left(1-\frac{\alpha_1}{n}\right)^n}{1-\left(1-\frac{\alpha}{n}\right)^n}
\]
as above.

Again, $c_0(\alpha_1) = c_1(\alpha_1) = 1$ and $c_1(\alpha_2) = c_2(\alpha_2) = 1$, which we already observed above. In contrast, consider $\alpha = \alpha_3$. There is a positive probability that the instance has a single box with value at least $\tau_3$ and another box with value at least $\tau_2$, but less than $\tau_3$. All remaining $n-2$ boxes have value less than $\tau_1$. In this case, the algorithm sees a positive test for $\tau_1$ and one for $\tau_2$, and selects the latter box. By the i.i.d.\ assumption, the algorithm gets a value of at least $\tau_3$ only with probability 1/2. Thus, $c_2(\alpha_3) = c_3(\alpha_3) < 1$.

To obtain the best ratio, we strive to select $0 < \alpha_3 < \alpha_2 < \alpha_1$  such that
\[
    \max_{\alpha_1,\alpha_2, \alpha_3} \; \{ \min_{\alpha \in I_3} c_3(\alpha), \, \min_{\alpha \in I_2} c_2(\alpha), \, \min_{\alpha \in I_1} c_1(\alpha), \, \min_{\alpha \in I_0} c_0(\alpha) \}
\]

Again, we numerically optimized parameters $\alpha_1,\alpha_2,\alpha_3$ and used standard solver software to numerically minimize $c_i(\alpha)$ for each $i=0,1,2,3$. The lower bounds for $c_3$ and $c_0$ then amount to $c_3(0) \ge 0.8693380\ldots$ and $c_0(\infty) \ge 0.8693371\ldots$. The minimum of $\lim_{n \to \infty} c_2(\alpha)$ is located roughly at $\alpha_2^* \approx 0.1162634\ldots$ with a value for $c_2(\alpha^*_2) \ge 0.8693454\ldots$. The minimum of $\lim_{n \to \infty} c_1(\alpha)$ is located roughly at $\alpha_1^* \approx 1.0351330\ldots$ with a value for $c_2(\alpha^*_2) \ge 0.8693365\ldots$. For a plot of the ratios see Figure~\ref{fig:3plot}.
\input{fig_3plot}

\end{document}